\long\def\@makecaption#1#2{\ifx\@captype\@IEEEtablestring%
\footnotesize\begin{center}{\normalfont\footnotesize #1}\\
{\normalfont\footnotesize\scshape #2}\end{center}%
\@IEEEtablecaptionsepspace \else \@IEEEfigurecaptionsepspace
\setbox\@tempboxa\hbox{\normalfont\footnotesize {#1.}~~ #2}%
\ifdim \wd\@tempboxa >\hsize%
\setbox\@tempboxa\hbox{\normalfont\footnotesize {#1.}~~ }%
\parbox[t]{\hsize}{\normalfont\footnotesize \noindent\unhbox\@tempboxa#2}%
\else \hbox
to\hsize{\normalfont\footnotesize\hfil\box\@tempboxa\hfil}\fi\fi}
\newcommand{\eg}{\textit{e.g.},}
\newcommand{\ie}{\textit{i.e.},}
\newcommand{\eqend}{\,.}
\newcommand{\iid}{IID}
\newcommand{\pr}[1]{\mathbb{P}\left\{ #1 \right\}}
\newcommand{\E}[1]{\mathbb{E}\left[#1\right]}
\newcommand{\com}[1]{\{#1\}}
\newtheorem{define}{Definition}
\newtheorem{lemma}{Lemma}
\newtheorem{theorem}{Theorem}
\newcommand{\srarq}{\mbox{SR-ARQ}}
\newcommand{\srfec}{\mbox{SR-FEC}}
\newcommand{\arq}{_\mathrm{\scriptscriptstyle ARQ}}
\newcommand{\fec}{_\mathrm{\scriptscriptstyle FEC}}
\begin{document}

\title{On the Scalability of Reliable Data Transfer in High Speed Networks}
\author{
    Majid Ghaderi$^{\dag}$
    \thanks{$^{\dag}$Department of Computer Science, University of Calgary, Email: {\tt mghaderi@ucalgary.ca}}
    and Don Towsley$^{\ddag}$
    \thanks{$^{\ddag}$Department of Computer Science, University of Massachusetts Amherst,
    Email: {\tt towsley@cs.umass.edu} }
}

\maketitle

\begin{abstract}
This paper considers reliable data transfer in a high-speed network (HSN) in which the per-connection capacity is very large. We focus on sliding window protocols employing {\em selective repeat} for reliable data transfer and study two reliability mechanisms based on ARQ and FEC. The question we ask is which mechanism is more suitable for an HSN in which the scalability of reliable data transfer in terms of receiver's buffer requirement and achievable delay and throughput is a concern. To efficiently utilize the large bandwidth available to a connection in an HSN, sliding window protocols require a large transmission window. In this regime of large transmission windows, we show that while both mechanisms achieve the same asymptotic throughput in the presence of packet losses, their delay and buffer requirements are different. Specifically, an FEC-based mechanism has delay and receiver's buffer requirement that are asymptotically smaller than that of an ARQ-based selective repeat mechanism by a factor of $\log W$, where $W$ is the window size of the selective repeat mechanism. This result is then used to investigate the implications of each reliability mechanism on protocol design in an HSN in terms of throughput, delay, buffer requirement, and control overhead.
\end{abstract}

\begin{IEEEkeywords}
Reliable data transfer, scalability, high-speed networks, selective
repeat.
\end{IEEEkeywords}

\IEEEpeerreviewmaketitle

\section{Introduction}

The role of a reliable data transfer protocol is to deliver data from a traffic source to a receiver such that no data packet is lost or duplicated, and all packets are delivered to the receiving application in the order in which they were sent\footnote{In this paper, we only consider unicast protocols.}. Currently, TCP is the dominant protocol for reliable transmission of data in the Internet. The reliability mechanism in TCP is based on a \emph{sliding window} protocol in which the set of packets that can be transmitted by the sender at any time instant is determined by a logical window over the stream of incoming packets. The size of the window is governed by the TCP congestion control algorithm known as additive increase multiplicative decrease (AIMD). The poor performance of AIMD in a high-speed network (HSN) has been subject to numerous studies, and consequently, several modifications have been proposed to remedy its shortcomings~\cite{li07tcp}.

In this work, we argue that another fundamental problem of TCP in an HSN is its reliability mechanism. This problem is particularly manifested in an HSN where there is a non-zero probability of packet loss, for example, due to small router buffers~\cite{app04}. It has been shown that high-speed TCP variants achieve poor bandwidth utilization due to small buffers unless the loss probability at routers is maintained at non-negligible levels, \eg\ $1\%$ (see~\cite{gu07}). This results in a large number of lost packets that need to be retransmitted by TCP, causing out-of-order packets, and consequently long delays and a large buffer requirement at the receiver.

We consider two sliding window reliable data transfer protocols based on Automatic Repeat Request (ARQ) and Forward Error Correction (FEC). A sliding window protocol allows the sender to have multiple on-the-fly packets, which have been transmitted but not acknowledged yet. Typically, the channel between the sender and receiver is \emph{lossy}, and hence some packets may not reach the receiver. To implement reliability, the receiver sends feedback to the sender over a feedback channel. In this work, we assume that the sender and receiver implement the \emph{selective repeat} protocol (SR) for reliable transmission of packets. In the selective repeat protocol, the sender can transmit multiple packets before waiting for feedback from the receiver, while the receiver accepts every packet that arrives at the receiver~\cite{kurose}.

There are several important performance measures associated with the selective repeat protocol: throughput, buffer occupancy at the sender and receiver, packet delay, and protocol overhead. For reliable data transfer, at the receiver, packets have to be delivered to the application \emph{in-order}. Packets can arrive at the receiver out-of-order, for example, some packets may be lost causing gaps in the sequence of received packets. Thus, the receiver may need to buffer the arriving packets for some time until they can be delivered in-order. An important metric in analyzing the selective repeat protocol is the so-called \emph{re-sequencing delay} at the receiver's buffer. The re-sequencing delay is the delay between the time when a packet is received at the receiver and the time it is being delivered to the application. The set of packets in the receiver's buffer is called the re-sequencing buffer occupancy. The goal of this paper is to analyze the receiver's buffer occupancy, packet delay and protocol overhead of the selective repeat protocol based on ARQ and FEC reliability mechanisms in a setting where both protocols achieve their respective maximum throughput.

Traditionally, SR is coupled with ARQ to recover from packet losses. In this paper, in addition to the traditional SR, we consider another variation of the protocol in which ARQ is augmented with FEC at the sender to enable the receiver to recover from packet losses more efficiently. The intuition is that, with FEC, packets do not have individual identities, and hence as long as a sufficient number of them arrive at the receiver, the receiver is able to decode and deliver packets in-order to the application. In contrast, with pure ARQ, transmitted packets have unique identities, and hence the receiver has to receive at least one copy of every packet before it can deliver the packets in-order to the application. As shown in this paper, it takes a longer time and requires a larger buffer space for the receiver to collect a copy of every packet as opposed to just collecting a given number of packets.

Throughout the paper, we use the term ``high-speed network'' to refer to a network where the per-connection capacity approaches infinity regardless of the number of connections in the network. We show that in an HSN, reliability mechanisms based on FEC are more suitable as their buffer and delay performance is superior to that of ARQ based reliability mechanisms, while asymptotically achieving the same throughput. Specifically, we show that the ARQ and FEC based mechanisms, respectively, require $\Theta(W \log W)$ and $\Theta(W)$ buffer space, and achieve $\Theta(\log W)$ and $\Theta(1)$ delay, where $W$ denotes the window size in the selective repeat protocol.

We note that employing FEC based reliability mechanisms in TCP has been proposed in the literature. LT-TCP was proposed in~\cite{kk05} to augment TCP with adaptive FEC to cope with high packet loss rates in wireless networks. TCP/NC proposed in~\cite{medard09} incorporates network coding as a middleware below TCP to deal with packet losses. Our focus in this paper is to characterize the asymptotic performance of ARQ and FEC based mechanisms in HSNs to determine which mechanism is more scalable when per-connection capacity is very large.

Our contributions can be summarized as:
\begin{itemize}
\item We derive closed-form expressions for the performance of ARQ and FEC based
reliability mechanisms in terms of throughout, receiver's buffer and
delay for an arbitrary window size $W$.

\item We characterize the asymptotic performance of the ARQ and FEC based reliability mechanisms
as the per-connection capacity, and consequently the window size $W$, grows to infinity.

\item The asymptotic results are used to compare the performances of the two mechanisms in terms of throughput, delay, buffer requirement and protocol overhead.
\end{itemize}

The rest of this paper is organized as follows. In Section~\ref{s:rdt}, we describe the reliability mechanisms considered in the paper. Sections~\ref{s:exact} and~\ref{s:asymptotic} are dedicated, respectively, to the exact and asymptotic analysis of ARQ and FEC based reliability mechanisms. Several extensions and implications of our results on protocol design are discussed in Section~\ref{s:ext}. Section~\ref{s:related} reviews some related work, while our concluding remarks are presented in Section~\ref{s:conc}.

\section{Reliable Data Transfer Protocols}
\label{s:rdt}

In the following subsections, we describe the operation of two
selective repeat protocols based on ARQ and FEC for reliable data
transfer in lossy networks. Our objective is to study the asymptotic performance of these protocols, and hence we focus on the basic operation of each protocol and ignore complications that arise due to variable round-trip-times, channel reordering and imperfect feedback.

In traditional SR, the window size $W$ is used to limit the buffer space required at the sender and receiver by limiting the difference between the greatest and smallest sequence numbers of outstanding packets (packets that have been transmitted but not acknowledged yet) to $W$. Such a mechanism requires exactly $W$ buffer space at the sender and receiver but has a throughput penalty, and thus is unable to fully utilize the available network capacity. For instance, if every packet in a window is acknowledged except the first packet, the sender is not allowed to transmit any new packets since the window is exhausted. In contrast, we allow our protocols to have up to $W$ outstanding packets regardless of the range of the sequence numbers of the transmitted packets. Clearly, this approach allows SR to fully utilize the available capacity, which is desirable in a HSN, but requires larger buffer space at the sender and receiver. Our objective in this paper is to characterize the increased buffer requirement and corresponding packet delay at the receiver.

\subsection{ARQ Based Selective Repeat Protocol}
The ARQ-based protocol, referred to as \emph{SR-ARQ}, employs ARQ for
reliable data transmission. The sender continuously transmits new
packets in increasing order of sequence numbers as long as ACKs are
received for the transmitted packets. For every packet arriving at
the receiver, the receiver sends an ACK to the sender over the
feedback channel. The ACK packet contains the sequence number of the
packet for which the ACK is being transmitted. After transmitting a
packet, the sender transmits up to $W-1$ subsequent packets (new or
retransmitted). If an ACK arrives at the sender during this period,
the corresponding packet is removed from the sender buffer and a new
packet is subsequently transmitted. If time-out occurs, \ie\ no ACK
arrives during this time, then the same packet is subsequently
transmitted again.
The receiver buffers out-of-order packets for later in-order delivery to the application. A packet is delivered to the application and removed from the receiver's buffer only when all packets with smaller sequence numbers have been released from the buffer.

Algorithm~\ref{a:arq} summarizes the main events and their corresponding actions at the sender and receiver under \srarq.

\begin{algorithm}
\caption{\srarq\ events and actions.}
\textbf{Sender:} \com{$s^*:$ next packet sequence number}
\begin{itemize}
    \item \textbf{ACK($s$):} \com{arrival of ACK for packet $s$}
    \begin{enumerate}
        \item mark packet $s$ as ACKed
        \item remove in-order ACKed packets from buffer
        \item transmit packet $s^*$ and start a timer for it
        \item $s^* \leftarrow s^*+1$
    \end{enumerate}

    \item \textbf{Timeout($s$):} \com{time-out for packet $s$}
    \begin{enumerate}
        \item retransmit packet $s$ and start a timer for it
    \end{enumerate}
\end{itemize}
\textbf{Receiver:}
\begin{itemize}
    \item \textbf{Receive($s$):} \com{arrival of packet $s$}
    \begin{enumerate}
        \item transmit ACK $s$
        \item copy packet $s$ to buffer
        \item remove in-order packets from buffer and deliver them to application
    \end{enumerate}
\end{itemize}
\label{a:arq}
\end{algorithm}

\subsection{FEC Based Selective Repeat Protocol}
The FEC-based protocol, referred to as \emph{SR-FEC}, is a selective repeat protocol that implements a \emph{rateless}~\cite{mackay} coding algorithm for packet transmission. At the sender, packets are divided into consecutive \emph{coding blocks} for transmission to the receiver. Each coding block contains $B>1$ packets, which are coded together to form coded packets. A consecutive sequence number is assigned to each block to uniquely identify coding blocks. Every coded packet carries its corresponding block sequence number in its header in addition to a packet sequence number that is unique within its corresponding block. Thus, the pair of block and packet sequence numbers uniquely identifies a packet.

While \srfec\ considered in this paper is independent of the specific coding technique used to generate coded packets, for the sake of concreteness, we assume that the sender performs {\em random fountain coding}~\cite{mackay} to generate coded packets. In random fountain coding, at every transmission instance, a subset of the packets in the current coding block is chosen randomly and the packets in the subset are XOR'ed together to create a coded packet. The coded packet is then transmitted to the receiver. Random fountain codes are a special class of fountain codes where the degree distribution of coded packets is uniform. While their decoding complexity is higher than LT and Raptor codes, they exhibit similar error correction performance, which is the reason for their consideration in this paper.

The receiver decodes packets in a block-by-block manner. To decode a
block, $B$ \emph{independent} coded packets from that block are required. Upon
receiving a coded packet, the receiver infers which original packets
were XOR'ed to create the newly received coded packet. When the
block+packet sequence numbers are unique, the coding and decoding
algorithms at the sender and receiver can be synchronized, hence
avoiding the need to include any extra information in packet headers
to help infer the identity of the XOR'ed packets. This can be
implemented, for example, by synchronizing a random number generator
at the receiver with the random number generator used at the sender
for choosing the set of XOR'ed packets.

Similar to \srarq, the sender continuously transmits coded packets in
increasing order of coding block sequence number as long as ACKs are
received. For every \emph{innovative} coded packet (\ie\ a packet
that is independent of the previously received coded packets for the
corresponding block) arriving at the receiver, the receiver sends an
ACK to the sender over the feedback channel. The ACK only contains
the sequence number of the coding block for which the ACK is being
transmitted. After transmitting a packet, the sender transmits up to
$W-1$ subsequent coded packets. The arrival of an ACK arrives at the sender
during this period triggers a new coded packet.
The new packet is constructed from the coding block that contains the
packet. If a time-out occurs, \ie\ no ACK arrives, then a coded packet
from the coding block that caused the time-out is subsequently
transmitted. When $B$ ACKs are received for a coding block then it is
released from the sender window.

Algorithm~\ref{a:fec} summarizes the main events and their corresponding actions at the sender and receiver under \srfec.

\begin{algorithm}
\caption{\srfec\ events and actions.}
\textbf{Sender:} \com{$b^*:$ next block sequence number}
\begin{itemize}
    \item \textbf{ACK($b$):} \com{ACK for block $b$}
    \begin{enumerate}
        \item if block $b$ has $B$ ACKs then mark it as ACKed
        \item remove in-order ACKed blocks from buffer
        \item transmit a coded packet for block $b^*$ and\\ start a timer for it
        \item if $b^*$ has $B$ pending and received ACKs then\\ $b^* \leftarrow b^*+1$
    \end{enumerate}
    \item \textbf{Timeout($b$):} \com{time-out for block $b$}
    \begin{enumerate}
        \item retransmit a coded packet for block $b$ and start a timer for it
    \end{enumerate}
\end{itemize}
\textbf{Receiver:}
\begin{itemize}
    \item \textbf{Receive($b$):} \com{arrival of coded packet for block $b$}
    \begin{enumerate}
        \item transmit ACK $b$
        \item copy the packet to receiver buffer for block $b$
        \item if block $b$ is full rank then decode block $b$
        \item remove in-order blocks from buffer and deliver them to application
    \end{enumerate}
\end{itemize}
\label{a:fec}
\end{algorithm}

\section{Exact Performance Analysis} \label{s:exact}
An important metric in characterizing the performance of ARQ
protocols is the \emph{expected number of retransmissions} required
at the sender until a packet arrives successfully at the receiver.
Our focus in this section is on deriving exact expressions for this
metric. Using this metric, we show how it can be used to compute the
receiver's buffer occupancy and packet delay for \srarq\ and \srfec.

\subsection{Assumptions}
We consider a heavy traffic situation, in which packets arrive at the
sender from an infinite source. For any arrival process, this model
provides upper bounds on the receiver buffer requirements and packet
delay. It is assumed that all packets are of the same size, and that
the packet loss process is independent of the packet size. This
represents a scenario where packet losses are primarily due to
congestion at routers along the path between the sender and receiver.
Specifically, we assume that packet losses form a Bernoulli process
with mean $p$.

The analysis presented in this paper assumes that the sender window
size is fixed. We note that in TCP, the sender window size is dynamic
and changes in response to network congestion situation and available
buffer at the receiver. In such cases, our model is not accurate and
should be extended to dynamic window sizes. However, in situations
where TCP congestion window is stable for long periods of time, our
results can be applied. For example, networks with low loss rate
(which are of interest in this work), if the TCP congestion window is
limited by the receiver's advertised window, then the sender window
size remains constant for long periods of time. In a sense, our model
assumes decoupled reliability from congestion control, as advocated
for in~\cite{sundaresan03atp,kohler06dccp,gu07}.

To simplify the analysis, as commonly assumed in the
literature~\cite{towsley79,konheim80arq,anagnos86,ros89}, we assume
that a reliable feedback channel exists between the sender and
receiver. The receiver sends feedback in the form of acknowledgement
packets (ACKs) to the sender, which are assumed to be delivered over
the feedback channel reliably. We also assume that the
round-trip-time between the sender and receiver is fixed and equal to
$R$.

\subsection{Performance Metrics}
In high-speed networks, to efficiently utilize network bandwidth, the
window size $W$ should grow proportional to the per-connection
capacity. This is simply stated as $W = R \cdot C$, where $C$ is the
per-connection capacity. Thus, as per-connection capacity increases,
the window size should increase as well resulting in increased buffer
occupancy at the sender and receiver. We assume that the sender and
receiver have sufficient buffer space to allow each protocol, namely
\srarq\ and \srfec, to operate efficiently.

With \srarq, every packet received at the receiver is a useful
packet, and hence the throughput of \srarq\ is given by $\rho = (1-p)
\, C$. However, with \srfec, only independent coded packets are
useful for the receiver to decode and recover the original packets.
To avoid transmitting coded packets that are not independent from the
previously transmitted packets, and hence wasting bandwidth, for
every active coding block, the sender memorizes the coded packets
transmitted so far for that coding block, for example, by keeping
track of the indices of the packet subsets used to construct the
coded packets. For a given coding block, the sender only sends
packets that are independent of the coded packets transmitted
previously. In this case, the throughput of \srfec\ is also given by
$\rho = (1-p) \, C$. The analysis presented in Sections~\ref{s:exact}
and~\ref{s:asymptotic} considers this case. Later, in
Section~\ref{s:ext}, we address the case when coded packets may be
dependent.

Let $Q$ and $D$ denote the re-sequencing buffer occupancy and delay
at the receiver, respectively. Using the Little's law, it follows
that
\begin{equation}
\begin{split}
    \E{D}
        = \frac{\E{Q}}{(1-p) C} = \frac{R}{(1-p) W} \, \E{Q}
    \eqend
\end{split}
\end{equation}
In the rest of the paper, we focus on computing $\E{Q}$ for both protocols.

\subsection{\srarq\ Analysis}
Consider a window of $W$ packets. Let $N_i$ denote the number of
times that the packet at position $i$, for $1 \le i \le W$, is
transmitted until it is received successfully at the receiver. It
follows that $N_i$ has a geometric distribution. In other words, the
probability that the packet at position $i$ is received successfully
in $n$-th transmission is given by
\begin{equation}
    \pr{N_i = n} = (1-p)p^{n-1}, \qquad n \ge 1
    \eqend
\end{equation}
Among the $W$ packets in the window, let $\ell$ denote the last
packet that is successfully received at the receiver, \ie\ the packet
that takes the most number of retransmissions. Let $N\arq$ denote the
number of transmissions required until $\ell$ is received at the
receiver. It follows that
\begin{equation}
    N\arq = \max_{1 \le i \le W} N_i
    \eqend
\end{equation}
Since packet losses are assumed to be independent, the following
relation holds for the probability distribution of the random
variable $N\arq$,
\begin{equation}
    \pr{N\arq \le n} = (1 - p^n)^W, \qquad n \ge 1
        \eqend
\end{equation}
Using the above expression, it follows that,
\begin{equation}
\begin{split}
    \E{N\arq}
        = \sum_{n=1}^{\infty} \pr{N \ge n}
        = \sum_{n=1}^{\infty} 1 - (1 - p^{n-1})^W
        \eqend
\end{split}
\end{equation}
Define $f(K)$ for $k \ge 1$ as,
\begin{equation}
    f(K) = \sum_{n=1}^{K} 1 - (1 - p^{n-1})^W,
\end{equation}
which, then yields,
\begin{equation}
    \E{N\arq} = \lim_{K \rightarrow \infty} f(K)
    \eqend
\end{equation}
Next, we focus on computing a closed-form expression for $f(K)$ that
can be used to compute $\E{N\arq}$:
\begin{equation}
\begin{split}
    f(K)
        &= K - \sum_{n=1}^{K} (1 - p^{n-1})^W \\
        &= K - \sum_{n=1}^{K} \sum_{i=0}^{W} \binom{W}{i} (-p^{n-1})^i\\
        &= - \sum_{i=1}^{W} \binom{W}{i} (-1)^i \frac{1 - (p^i)^K}{1 - p^i}
        \eqend
\end{split}
\end{equation}

Taking the limit of $f(K)$ as $K \rightarrow \infty$ yields the
following result for $\E{N\arq}$:
\begin{equation}
\label{e:EN-arq}
\begin{split}
    \E{N\arq}
        &= \lim_{K \rightarrow \infty} f(K)\\
        &= - \sum_{i=1}^{W} \binom{W}{i} \frac{(-1)^i}{1 - p^i}
        \eqend
\end{split}
\end{equation}

The above expressions are valid when $ 0\le p < 1$. If $p=1$, then no
packet is received at the receiver and hence the receiver's buffer
will be empty. Over the range $ 0\le p < 1$, as $p$ increases so does
the number of lost packets. As the number of lost packets increases,
the number of out-of-order packets in the receiver's buffer increases
as well. 

\subsection{\srfec\ Analysis}
With coding, packets are delivered to the application in a
block-by-block manner. For every coding block, the receiver needs to
receive $B$ independent coded packets in order to decode and recover
the original coding block. Let $N_i$ denote the number of transmitted
coded packets until the receiver decodes coding block $i$. The
probability that the receiver receives exactly $B$ coded packets
after the $n$-th packet transmission by the sender for coding block
$i$ is given by a negative binomial distribution, \ie
\begin{equation}
    \pr{N_i = n} = \tbinom{n-1}{B-1} (1-p)^B p^{n-B}, \quad n \ge B
\end{equation}
which is the probability that $B-1$ packets out of the first $n-1$
transmissions are received successfully and the last transmission is
a success too. Recall that, for the moment, we are assuming all coded
packets are independent following the mechanism described earlier.
Therefore, we obtain that,
\begin{equation}
\begin{split}
    \pr{N_i \le n} &= \sum_{i=B}^{n} \tbinom{i-1}{B-1} (1-p)^B p^{i-B},\\
                    &= (1-p)^B \sum_{i=0}^{n-B} \tbinom{B+i-1}{B-1} p^i
    \eqend
\end{split}
\end{equation}

Consider a typical window of $W$ packets consisting of $M = W/B$
coding blocks. Without loss of generality, we assume that $W$ is an
integer multiple of $B$. Among the $M$ coding blocks, let $\ell$
denote the last block that is decoded by the receiver, \ie\ the block
that takes the most number of transmissions until its corresponding
$B$ coded packets are received. Let $N\fec$ denote the number of
transmissions required until block $\ell$ is decoded by the receiver.
It is obtained that
\begin{equation}
    N\fec = \max_{1 \le i \le W} N_i
    \eqend
\end{equation}
Since packet losses are assumed to be independent, the following
relation holds for the probability distribution of the random
variable $N\fec$,
\begin{equation}
    \pr{N\fec \le n} = (1-p)^W \Big( \sum_{i=0}^{n-B} \tbinom{B+i-1}{B-1} p^i \Big)^M
    \eqend
\end{equation}
However, it seems unlikely that this would result in any simple
expression for $\E{N\fec}$. Note that $N\fec$ is the number of
transmissions per coding bock of size $B$. Thus, the expected number
of transmissions per packet is given by $\frac{1}{B} \E{N\fec}$.

\section{Asymptotic Performance Analysis} \label{s:asymptotic}
The exact expressions derived in the previous section do not provide
insight about the scaling behavior of the number of transmissions
with respect to $W$. In this section, we derive asymptotic
expressions for the expected number of transmissions for both \srarq\
and \srfec. Then, using our asymptotic results, we compute the
expected receiver's buffer occupancy.
In particular, we are interested in the asymptotic performance of the
protocols as the window size becomes very large, \ie\ $W \rightarrow
\infty$. For \srfec, we consider two scaling regimes: (1) the coding
block length remains constant, \ie\ $B \in \Theta(1)$, and (2) the
coding block length scales at the same rate as the window size, \ie\
$B \in \Theta(W)$.

The asymptotic analysis presented in this section utilizes techniques
from the extreme value theory~\cite{galambos}, which is briefly
described next.

\subsection{Extreme Value Theory}
Consider a sequence of \iid\ random variables $X_1, X_2, \ldots, X_n$
with a common cumulative distribution function $F$, \ie\
    $F(x) = \pr{X_i \le x}$.
Let the $X_i$'s be arranged in increasing order of magnitude and
denoted by $X_{(1)}, X_{(2)}, \dots, X_{(n)}$. These ordered values
of $X_i$'s are called \emph{order statistics}, $X_{(k)}$ being the
$k$th order statistics. Note that $X_{(1)}$ and $X_{(n)}$ are the
minimum and maximum and are called the \emph{extreme} order
statistics.

Of particular interest in this paper is the asymptotic behavior of
$X_{(n)}$ for large $n$. Since $X_i$'s are independent, the
cumulative distribution function of $X_{(n)}$ is given by
\begin{equation*}
\begin{split}
    \pr{X_{(n)} \le x}
        = \pr{X_{(1)} \le x, \ldots, X_{(n)} \le x}
        =  F(x)^n,
\end{split}
\end{equation*}
and, thus, for $F(x) < 1$, as $n \rightarrow \infty$,
\begin{equation*}
    \pr{X_{(n)} \le x} \rightarrow 0
    \eqend
\end{equation*}

While it is possible to study the exact distribution of $X_{(n)}$, in
many applications, only the behavior of the tail of the distribution
is of interest, which is generally difficult to characterize using
this exact form. Extreme value theory provides a framework to study
the limiting distribution of extreme order statistics.

\begin{theorem}[Extreme Value Theorem~\cite{galambos}]
If there exists a sequence of real constants $a_n > 0$ and $b_n$,
such that, as $n \rightarrow \infty$,
\begin{equation}
\label{e:evt}     
    \pr{\frac{X_{(n)} - b_n}{a_n} \le x} = F(a_n x + b_n)^n \rightarrow G(x),
\end{equation}
for some non-degenerate distribution function $G$, then the limit
distribution $G$ must be one of the three extreme value distributions
as follows.
\begin{align*}
        \text{Gumbel: } G(x) &= \exp(-e^{-x}), \quad -\infty < x < \infty;\\
        \text{Fr\'{e}chet: } G(x) &=
                            \begin{cases}
                                0, & x \le 0,\\
                                \exp(-x^{-\alpha}), & x > 0, \alpha > 0;
                            \end{cases}\\
        \text{Weibull: } G(x) &=
                            \begin{cases}
                                \exp(-(-x)^{\alpha}), & x \le 0, \alpha > 0,\\
                                1, & x > 0 \eqend
                            \end{cases}
\end{align*}
\end{theorem}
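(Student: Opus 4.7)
The plan is to follow the classical Fisher--Tippett--Gnedenko route, which reduces the classification to solving a functional equation for $G$. The starting observation is that, if $F^{n}(a_{n} x + b_{n}) \to G(x)$ at continuity points, then for any fixed positive integer $k$ we may apply the hypothesis along the subsequence $nk$ to get $F^{nk}(a_{nk} x + b_{nk}) \to G(x)$, while raising the original limit to the $k$-th power gives $F^{nk}(a_{n} x + b_{n}) \to G^{k}(x)$. Thus $G$ and $G^{k}$ arise as limits of the same array $F^{nk}$ under two different affine normalizations, so they must be of the same type.

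The tool that makes this precise is Khintchine's convergence-to-types lemma: if $H_{n}(\alpha_{n} x + \beta_{n}) \to H(x)$ and $H_{n}(\alpha'_{n} x + \beta'_{n}) \to \widetilde H(x)$ with both limits non-degenerate, then $\alpha_{n}/\alpha'_{n} \to A > 0$ and $(\beta_{n} - \beta'_{n})/\alpha'_{n} \to B$ and $\widetilde H(x) = H(Ax+B)$. Applying this to the two displays above yields constants $A(k) > 0$ and $B(k)$ such that
\[
G^{k}(x) = G\bigl(A(k)\, x + B(k)\bigr), \qquad k = 1,2,\ldots,
\]
i.e.\ $G$ is \emph{max-stable}. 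Next I would promote this from integer to real parameters $t>0$ by a standard density/monotonicity argument (taking $k$-th and $m$-th roots and passing to limits), arriving at $G^{t}(x) = G(A(t) x + B(t))$ for all $t > 0$. Composing this identity in two different orders forces the multiplicative relation $A(st) = A(s) A(t)$ together with $B(st) = A(s) B(t) + B(s)$; monotone measurable solutions of Cauchy's equation then give $A(t) = t^{-\theta}$ for some $\theta \in \mathbb{R}$, with $B$ determined up to an affine change of variables.

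The three qualitative cases $\theta > 0$, $\theta < 0$, and $\theta = 0$ correspond respectively to the Fr\'echet, Weibull, and Gumbel families. In each case I would linearise by setting $U(x) = -\log G(x)$, which transforms the max-stability identity into $U(A(t) x + B(t)) = U(x)/t$, a standard functional equation whose continuous monotone solutions are pure power laws in $\theta \neq 0$ and pure exponentials in $\theta = 0$. A location/scale normalization then puts each solution into exactly one of the three canonical forms stated in the theorem.

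The main obstacle is the endpoint bookkeeping in the last step: one must carefully identify the finite or infinite endpoints of the support of $G$ in each regime (upper endpoint finite in the Weibull case, lower endpoint $0$ in the Fr\'echet case, both infinite in the Gumbel case) and verify that the affine normalization which reduces $G$ to the canonical form is consistent with the non-degeneracy hypothesis. All preceding steps are either mechanical applications of Khintchine's theorem or routine Cauchy-equation manipulations; the three-family dichotomy emerges as soon as the functional equation $G^{t}(x) = G(A(t) x + B(t))$ is classified.
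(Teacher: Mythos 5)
The paper gives no proof of this theorem at all: it is quoted as a known result and deferred entirely to the cited reference (Galambos), so there is nothing internal to compare against. Your sketch is a correct outline of the standard Fisher--Tippett--Gnedenko argument (convergence of types, max-stability, Cauchy functional equation, three-way classification by the sign of $\theta$), which is essentially the proof found in that reference, so it is consistent with the paper's (external) justification.
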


\begin{define}[Maximum Domain of Attraction~\cite{haan06evt}]
The maximum domain of attraction of a distribution $G$ is the set of all distributions $F$ that has $G$ as the limiting distribution. That is, as $n \rightarrow \infty$, constants $a_n > 0$ and $b_n$ exist such that~\eqref{e:evt} holds.
\end{define}

Define symbol $x^*$ as the upper bound of $X_1$, that is,
\begin{equation*}
    x^* = \sup\{x | F(x) < 1\}
    \eqend
\end{equation*}
\begin{define}[von Mises Function~\cite{haan06evt}]
Suppose $F(x)$ is a distribution function with density $f(x)$ which is positive and differentiable on a left neighborhood of $x^*$. If
\begin{equation}
\label{e:mises}
    \lim_{x \rightarrow x^*} \frac{d}{dx} \Big( \frac{1-F(x)}{f(x)} \Big) = 0,
\end{equation}
then $F$ is called a von Mises function.
\end{define}
If distribution function $F$ is a von Mises function then it belongs to the domain of attraction of the Gumbel distribution with a possible choice of the sequence of the normalizing constants $a_n$ and $b_n$ given by
\begin{equation}
\label{e:ab}
    b_n = F^{-1}(1 - \frac{1}{n}), \quad
    a_n = \frac{1 - F(b_n)}{f(b_n)}
    \eqend
\end{equation}
Noting that for the
standard Gumbel distribution, the mean is given by $\gamma$, where
$\gamma\approx0.57721$ is the Euler's constant, it is obtained that, as $n \rightarrow \infty$,
\begin{equation}
\label{e:mean}
    \E{X_{(n)}} = a_n \gamma + b_n
    \eqend
\end{equation}

\subsection{\srarq\ Analysis}
Recall that $N_i$, for $1 \le i \le W$, has a geometric distribution.
Thus, $N\arq$ is the maximum of $W$ independent and geometrically
distributed random variables with the mean $1/(1-p)$. We will show
that $\E{N\arq} \in \Theta(\log W)$.

\begin{theorem}
\label{t:arq}
    $\E{N\arq} \in \Theta(\log W)$.
\end{theorem}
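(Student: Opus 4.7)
My approach is to argue directly from the closed-form expression
\begin{equation*}
    \E{N\arq} = \sum_{n=1}^{\infty} \bigl[\, 1 - (1 - p^{n-1})^W \,\bigr]
\end{equation*}
derived earlier in this section, splitting the summation at the critical threshold $T := \lfloor \log W / \log(1/p) \rfloor$, the value of $n$ at which $W p^{n-1}$ crosses $1$. Intuitively, for $n \le T$ the summand is close to $1$, while for $n > T$ it decays geometrically. Showing that these two regimes contribute $\Theta(\log W)$ and $O(1)$ respectively yields the claim.

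For the upper bound, I would use the trivial bound $1 - (1 - p^{n-1})^W \le 1$ for $n \le T$, contributing at most $T = O(\log W)$. For $n > T$, Bernoulli's inequality gives $1 - (1-p^{n-1})^W \le W p^{n-1}$, and by the choice of $T$ this is at most $p^{n-1-T}$, so the tail is bounded by the geometric sum $\sum_{k \ge 0} p^k = 1/(1-p)$. Thus $\E{N\arq} \le T + 1/(1-p) = O(\log W)$.

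For the lower bound, applying $(1-x)^W \le e^{-Wx}$ for $x \in [0,1]$ yields $1 - (1-p^{n-1})^W \ge 1 - e^{-Wp^{n-1}} \ge 1 - e^{-1}$ whenever $n \le T$, since $Wp^{n-1} \ge 1$ throughout that range. Summing over $n = 1, \ldots, T$ gives $\E{N\arq} \ge (1 - e^{-1})\, T = \Omega(\log W)$. Combining the two halves produces $\E{N\arq} = \Theta(\log W)$, with implicit constant $1/\log(1/p)$.

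The only delicate point is the boundary book-keeping at $T$, where rounding affects the constant but not the $\log W$ order. As an alternative, one could invoke the extreme value machinery recalled above: since $\pr{N_i > n} = p^n$ coincides at integers with the tail of a rate-$\log(1/p)$ exponential, $N\arq$ is sandwiched by the maxima of $W$ i.i.d.\ exponentials, which lie in the Gumbel domain of attraction and whose expectations satisfy~\eqref{e:mean} with $a_n = 1/\log(1/p)$ and $b_n = \log W / \log(1/p)$. This route would in fact give the sharper asymptotic $\E{N\arq} = \log W / \log(1/p) + \gamma / \log(1/p) + o(1)$, but the elementary sandwich above suffices for the stated $\Theta$ result and avoids the discrete-to-continuous comparison.
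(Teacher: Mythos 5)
Your argument is correct, but it takes a genuinely different route from the paper. The paper does not work with the closed-form series at all: it approximates each geometric $N_i$ by an exponential random variable with rate $\lambda = -\ln p$, argues that this changes the mean of the maximum by only an additive constant, and then invokes extreme value theory --- the exponential distribution is a von Mises function, hence lies in the Gumbel domain of attraction, and the normalizing constants $a_W = \lambda^{-1}$, $b_W = \lambda^{-1}\ln W$ give $\E{X} = (\ln W + \gamma)/(-\ln p)$. Your elementary split of $\sum_{n\ge 1}\bigl[1-(1-p^{n-1})^W\bigr]$ at the threshold $T$ buys a self-contained proof with explicit constants that avoids both the discrete-to-continuous comparison and the (implicit in the paper) step of upgrading convergence in distribution to convergence of expectations; what it gives up is the sharp second-order asymptotic with the Euler constant, which the EVT route delivers and which the paper reuses later for \srfec. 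One bookkeeping remark: with $T=\lfloor \log W/\log(1/p)\rfloor$ you have $Wp^{T}\ge 1$, so the bound $Wp^{n-1}\le p^{\,n-1-T}$ actually requires starting the tail at $n\ge T+2$ (or taking the ceiling instead of the floor); this shifts the split by one index and changes only the additive constant, exactly as you anticipate, so it is not a gap. Your closing observation that $N_{\mathrm{ARQ}}$ is sandwiched by maxima of i.i.d.\ exponentials is essentially the paper's proof.
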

\begin{proof}
We approximate each geometric random variable $N_i$ by an
exponentially distributed random variable $X_i$ with an appropriate
rate $\lambda$. To obtain the same numerical values for the probability
distribution of $N_i$ and $X_i$, the following relation must be
satisfied:
\begin{equation}
    \pr{N_i \le n} = \pr{X_i \le n},
\end{equation}
which yields $\lambda=-\ln p$. Define $X$ as follows,
\begin{equation}
    X = \max_{1 \le i \le W} X_i
    \eqend
\end{equation}
Next, we show that the approximation error introduced as the result
of approximating $\E{N\arq}$ by $\E{X}$ is bounded by a constant,
\ie\ $\E{N\arq} = \E{X} + \Theta(1)$. To compute the approximation
error, denoted by $\epsilon$, we note that
\begin{equation*}
    \E{X_i} = \sum_{n=0}^{\infty} \int_{n}^{n+1} \pr{X_i > x}\,dx
    \eqend
\end{equation*}
Therefore,
\begin{equation}
\begin{split}
    \epsilon = \E{N_i} - \E{X_i}
             &= \sum_{n=0}^{\infty} p^n + \frac{1}{\lambda}
                \sum_{n=0}^{\infty} (e^{-(n+1)\lambda} - e^{-n\lambda})\\
             = \frac{1}{1-p} + \frac{1}{\ln p}, \quad 0 \le p \le 1
    \eqend
\end{split}
\end{equation}
Thus, the problem is reduced to computing the maximum of $W$ \iid\
exponentially distributed random variables with rate $\lambda=-\ln p$.

\begin{lemma}
The distribution function of an exponentially distributed random variable is a von Mises function.
\end{lemma}
\begin{proof}
See~\cite{haan06evt}.
\end{proof}
Therefore, there exist constants $a_W > 0$ and $b_W$ so that
$(X-b_W)/a_W$ converges in distribution to the standard Gumbel
distribution, as $W \rightarrow \infty$. A possible choice for $a_W$
and $b_W$ is given by~\eqref{e:ab}, which results in (\eg\
see~\cite{haan06evt}),
\begin{equation*}
    a_W = \lambda^{-1},  \quad b_W = \lambda^{-1} (\ln W)
    \eqend
\end{equation*}
Substituting the above values in~\eqref{e:mean} yields the following
expression for $\E{X}$,
\begin{equation}
    \E{X} = \frac{\ln W + \gamma}{-\ln p},
\end{equation}
as $W \rightarrow \infty$. Given that $\E{N\arq}$ is within a
constant of $\E{X}$, the theorem follows.
\end{proof}

Let $\E{Q\arq}$ denote the expected buffer occupancy at the receiver
under \srarq. In the following, we establish upper and lower bounds
on $\E{Q\arq}$, and show that $\E{Q\arq}$ scales as $W \log W$.

\begin{lemma}
\label{l:up}
    $\E{Q\arq} \in O(W \log W)$.
\end{lemma}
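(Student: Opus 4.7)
The plan is to apply Little's law to the receiver's re-sequencing buffer. Since the throughput through that buffer is $(1-p)C=(1-p)W/R$,
\begin{equation*}
    \E{Q\arq} = (1-p)\,C\,\E{D\arq},
\end{equation*}
so it suffices to show $\E{D\arq} = O(R\log W)$; multiplying by $(1-p)W/R$ then yields the claimed $O(W\log W)$ bound, with the $\log W$ factor coming directly from Theorem~\ref{t:arq}.

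To bound $\E{D\arq}$, I would pick a generic packet $k$ and examine the set of packets that delay it. Letting $a_k$ denote the arrival time of $k$ at the receiver, the re-sequencing delay $D_k$ is the time from $a_k$ until every $j<k$ with $a_j>a_k$ has arrived. Any such $j$ has been transmitted but not yet received, so it lies in the sender's outstanding set at time $a_k$, whose size is capped at $W$ by the window constraint. By the memoryless property of the geometric loss process, the number of additional retransmissions required for each of these outstanding packets, from time $a_k$ onward, is again geometric with parameter $1-p$, and the losses are independent across packets. Thus $D_k$ is upper-bounded by $R$ times the maximum of at most $W$ i.i.d.\ geometric random variables, which has the same distribution as $N\arq$. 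Hence $\E{D\arq}\le R\,\E{N\arq} = O(R\log W)$ by Theorem~\ref{t:arq}, and Little's law gives $\E{Q\arq}=O(W\log W)$.

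The main obstacle I anticipate is justifying the reduction to a maximum over only $W$ geometrics. As emphasized in Section~\ref{s:rdt}, the protocol caps the \emph{number} of outstanding packets at $W$ but places no bound on the \emph{range} of their sequence numbers, so it is not legitimate to argue naively that packet $k$ waits only for packets in positions $\{k-W+1,\ldots,k-1\}$. The memorylessness observation is exactly what closes this gap: no matter how far back in the sequence an outstanding packet sits, from time $a_k$ onward its residual retransmission count is a fresh geometric, and what matters is the \emph{number} of such packets, which is bounded by $W$. Once this reduction is in place, Theorem~\ref{t:arq} supplies the $O(\log W)$ scaling and the remainder is routine bookkeeping.
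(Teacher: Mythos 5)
Your proof is correct, but it reaches the bound along a different path than the paper. The paper works directly on the buffer side: it identifies the packet $\ell$ with the largest retransmission count, observes that at most $W-1$ other packets are transmitted (and hence potentially buffered) between consecutive retransmissions of $\ell$, and concludes $Q\arq \le (W-1)\,N\arq$, hence $\E{Q\arq} \le (W-1)\E{N\arq} \in O(W\log W)$. You instead work on the delay side: a tagged packet $k$ is blocked only by the at most $W$ outstanding packets with smaller sequence numbers, whose residual retransmission counts are fresh independent geometrics by memorylessness, so $\E{D\arq} \le R\,\E{N\arq}$, and Little's law converts this into $\E{Q\arq} = (1-p)C\,\E{D\arq} \le (1-p)W\,\E{N\arq}$. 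Both arguments reduce to the same key fact, $\E{N\arq} \in \Theta(\log W)$, and both implicitly assume one retransmission round per round-trip time $R$. Your version has the merit of making explicit why $W$ is the correct count of blocking packets --- the window caps the \emph{number} of outstanding packets rather than their sequence-number range, and memorylessness handles packets that have already failed many times --- a point the paper's worst-case narrative glosses over; the paper's version avoids Little's law entirely and bounds the sample path of $Q\arq$ directly, which is marginally more elementary and sidesteps any question about applying Little's law to a per-packet expectation.
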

\begin{proof}
Let $\ell$ denote the packet that requires the most number of
retransmissions in a window. In the worst-case, $\ell$ is the first
packet in the window, \ie\ the packet with the smallest sequence
number, and all packets with sequence numbers greater than that of
$\ell$ are always received correctly at the receiver. In this case,
between every retransmission of $\ell$, $W-1$ new packets with
greater sequence numbers are transmitted. The receiver has to buffer
all these new packets until the last retransmission of $\ell$
arrives. As the result, we obtain that
\begin{equation}
    Q\arq \le (W-1) \cdot N\arq,
\end{equation}
and, therefore,
\begin{equation}
    \E{Q\arq} \le (W-1) \cdot \E{N\arq}
    \eqend
\end{equation}
It then follows that $\E{Q\arq} \in O(W \log W)$.
\end{proof}

\begin{lemma}
\label{l:low}
    $\E{Q\arq} \in \Omega(W \log W)$.
\end{lemma}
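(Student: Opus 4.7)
The plan is to mirror the upper bound argument in Lemma~\ref{l:up} to obtain a matching lower bound. The intuition is that while the slowest packet $\ell$ in a window of $W$ is being repeatedly retransmitted, the sender continues to transmit fresh packets in order to utilize the full bandwidth, and these packets---most of which arrive successfully at the receiver---accumulate in the re-sequencing buffer behind $\ell$.

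First I would set up the same scenario used in Lemma~\ref{l:up}: a window of $W$ packets with $\ell$ the packet requiring $N\arq = \max_i N_i$ transmissions, and restrict attention to realizations in which $\ell$ is the sole head-of-line blocker throughout its retransmission rounds. Between each pair of consecutive transmissions of $\ell$, the sender issues $W-1$ other packets with greater sequence numbers (the same counting used for the upper bound), so over the $N\arq - 1$ intervals between $\ell$'s failed attempts a total of $(N\arq - 1)(W-1)$ such distinct packets are transmitted. By the Bernoulli loss assumption, in expectation a fraction $1-p$ of these transmissions succeed, and because $\ell$ is the blocker each such arrival must sit in the buffer until $\ell$ finally arrives. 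This yields
\begin{equation*}
    \E{Q\arq} \ge (1-p)(W-1)(\E{N\arq} - 1),
\end{equation*}
and invoking Theorem~\ref{t:arq} gives $\E{Q\arq} \in \Omega(W \log W)$.

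The main obstacle is justifying that the peak buffer occupancy bounded above also lower-bounds the time-averaged $\E{Q\arq}$, rather than merely a snapshot just before $\ell$ arrives. Within the cycle of length roughly $N\arq R$ during which $\ell$ blocks delivery, the buffer grows approximately linearly at rate $(1-p)C$, so its time-average over the cycle is at least a constant fraction of the peak and the $\Omega(W\log W)$ bound is preserved. A cleaner alternative that sidesteps the snapshot-versus-average distinction is to lower-bound $\E{D}$ directly and invoke the Little's-law identity $\E{D} = R\,\E{Q\arq}/((1-p)W)$ stated earlier: writing the per-packet receiver delay as $M_i - \tau_i$ with $M_i = \max_{j\le i}\tau_j$ and $\tau_i$ the arrival time (in RTTs) of the $i$-th packet in the window, a repeated application of Theorem~\ref{t:arq} to prefixes of size $i$ gives $\sum_{i=1}^W \E{M_i} \in \Omega(W\log W)$, so the mean per-packet receiver delay is $\Omega(R\log W)$ and therefore $\E{Q\arq} \in \Omega(W\log W)$.
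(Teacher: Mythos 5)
Your first argument does not establish the lower bound: by restricting to realizations in which $\ell$ is the sole head-of-line blocker and counting all $W-1$ packets per round as blocked behind it, you are re-using the \emph{worst-case} configuration from Lemma~\ref{l:up}. That is legitimate for an upper bound on $Q\arq$, but a lower bound on $\E{Q\arq}$ must survive the unfavorable realizations too --- in particular those where $\ell$ sits at the \emph{end} of the window, so that the other $W-1$ packets of the window have smaller sequence numbers and are never blocked by $\ell$. Conditioning on a favorable sub-event $A$ only gives $\E{Q\arq} \ge \E{Q\arq \mid A}\pr{A}$, and you have not bounded $\pr{A}$ away from zero. The paper closes exactly this hole by taking the best case for the receiver ($\ell$ last in the window, all other arrivals in order) and showing that the packets ahead of $\ell$ are delivered within a constant expected number of transmission rounds ($\E{\tau}\le 1/\lambda$, via minima of exponentials), after which the window has advanced and at least $W/2$ new packets with sequence numbers beyond $\ell$ accumulate per retransmission of $\ell$, giving $\E{Q\arq}\ge \tfrac{W}{2}\,(\E{N\arq}-1/\lambda)$. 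Your first derivation is missing this ``constant time to clear the head of the window'' step, and it also bounds the peak occupancy just before $\ell$ arrives rather than the time-averaged $\E{Q\arq}$ that Little's law relates to delay --- a concern you raise yourself but only wave at.

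Your second, Little's-law route is sound and genuinely different from (and arguably cleaner than) the paper's argument: writing the re-sequencing delay of packet $i$ as $M_i-\tau_i$ with $M_i=\max_{j\le i}\tau_j$, the extreme-value estimate applied to each prefix gives $\E{M_i}=\Theta(\log i)$ while $\E{\tau_i}=\Theta(1)$, so the mean per-packet re-sequencing delay is $\Omega(\log W)$ round-trips and $\E{Q\arq}=(1-p)\,C\cdot\E{D}\in\Omega(W\log W)$. This sidesteps both the positioning-of-$\ell$ case analysis and the peak-versus-average issue, at the cost of idealizing all $W$ packets as launched at the start of the window --- the same idealization the paper already makes. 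If you promote this from a fallback remark to the actual proof, the lemma is established.
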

\begin{proof}
To compute a lower bound, consider the case when $\ell$ is the last
packet in the window, \ie\ has the largest sequence number in the
window. As time progresses, some of the packets with smaller sequence
number are received at the receiver in-order and consequently
delivered to the application. Every new packet that arrives at the
receiver with sequence number greater than that of $\ell$ will be
placed in the buffer. In the best-case, packets arrive at the
receiver in-order so that every time a packet is received it can be
delivered to the application and its corresponding buffer is
released.

Next, consider the time (in terms of the number of packet
transmissions) it takes for the first $W/2$ packets in the window to
be received and delivered to the application, assuming that all these
packets arrive before packets in the second half of the window (as we
are considering the best-case). We show that this time, denoted by
$\tau$, is upper bounded by a constant independent of the window size
$W$. Thus, after some constant initial time $\tau$, for every
retransmission of $\ell$, at least $W/2$ new packets with higher
sequence numbers arrive at the receiver that need to be buffered.

We are interested in the expected number of transmissions required to
receive the first $W/2$ packets in the best-case. For $n$ \iid\
exponentially distributed random variables with the rate $\lambda$,
it is well-know that the time until the first occurrence is given by
$1/(n\lambda)$. Thus, the expected time until $W/2$ packets arrive at
the receiver is given by,
\begin{equation}
    \E{\tau} = \frac{1}{\lambda} \big( \frac{1}{W} + \cdots + \frac{1}{W/2} \big) \le \frac{1}{\lambda}
    \eqend
\end{equation}
Putting it together, we have,
\begin{equation}
    Q\arq \ge \frac{W}{2} \cdot (N\arq - \tau),
\end{equation}
and, therefore,
\begin{equation}
    \E{Q\arq} \ge \frac{W}{2} \cdot (\E{N\arq} - \frac{1}{\lambda})
    \eqend
\end{equation}
It then follows that $\E{Q\arq} \in \Omega(W \log W)$.
\end{proof}

\begin{theorem}
\label{t:arq}
    $\E{Q\arq} \in \Theta(W \log W)$.
\end{theorem}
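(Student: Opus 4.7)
The plan is essentially to observe that Theorem~\ref{t:arq} is the direct combination of the two lemmas just established. Specifically, Lemma~\ref{l:up} gives $\E{Q\arq} \in O(W \log W)$ by considering the worst-case scenario where the slowest packet $\ell$ is the first in the window, so that all $W-1$ later packets pile up in the receiver's buffer during each retransmission round of $\ell$, yielding $\E{Q\arq} \le (W-1)\E{N\arq}$ and then invoking Theorem~1 on $\E{N\arq} \in \Theta(\log W)$. Symmetrically, Lemma~\ref{l:low} gives $\E{Q\arq} \in \Omega(W \log W)$ by taking $\ell$ to be the last packet in the window and arguing that, after a constant-time transient of length $\E{\tau} \le 1/\lambda$ during which the first $W/2$ packets are absorbed in order, every subsequent retransmission of $\ell$ still forces at least $W/2$ newer packets into the buffer.

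Since $\Theta = O \cap \Omega$ by definition, the theorem follows immediately by combining these two bounds. Concretely, I would write: from Lemma~\ref{l:up} we have $\E{Q\arq} \le (W-1)\E{N\arq}$, while from Lemma~\ref{l:low} we have $\E{Q\arq} \ge (W/2)(\E{N\arq} - 1/\lambda)$. Applying Theorem~1, which gives $\E{N\arq} = \Theta(\log W)$, the upper bound is of order $W \log W$ and the lower bound, for $W$ large enough that $\E{N\arq}$ dominates the constant $1/\lambda$, is also of order $W \log W$. Therefore $\E{Q\arq} \in \Theta(W \log W)$, as claimed.

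There is no real obstacle here: all the technical work, namely the extreme-value analysis for $\E{N\arq}$ and the best-case/worst-case combinatorial arguments for the buffer occupancy, has already been carried out in Theorem~1 and Lemmas~\ref{l:up} and~\ref{l:low}. The only minor subtlety worth noting in the write-up is that the lower bound in Lemma~\ref{l:low} subtracts a constant $1/\lambda = -1/\ln p$ from $\E{N\arq}$; since $\E{N\arq}$ grows like $\log W$, this additive constant is asymptotically negligible and does not affect the $\Omega(W\log W)$ conclusion. The proof is thus a one-line invocation of the two preceding lemmas.
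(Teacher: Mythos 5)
Your proposal is correct and matches the paper's proof exactly: the theorem is obtained by combining Lemma~\ref{l:up} and Lemma~\ref{l:low}, with the extreme-value estimate $\E{N\arq} \in \Theta(\log W)$ doing the real work. Your added remark that the additive constant $1/\lambda$ in the lower bound is asymptotically negligible is a sensible (if unstated in the paper) clarification, but it does not change the argument.
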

\begin{proof}
It follows from Lemmas~\ref{l:up} and~\ref{l:low}.
\end{proof}

\subsection{\srfec\ Analysis}
It is well-known that the performance of rateless coding schemes
depends on the size of the coding block. To capture this dependency,
we consider two scaling regimes as follows:
\begin{enumerate}[I.]
\item The size of the coding block remains constant as $W$ grows, \ie\ $B \in \Theta(1)$.
In this case, the number of coding blocks $M$ grows at the same rate
as $W$.
\item The size of the coding blocks grows at the same rate as $W$. In
this case, the number of coding blocks remains constant, \ie\ $M \in
\Theta(1)$, while the size of the coding blocks grows to infinity.
\end{enumerate}

In the rest of this section, we analyze the performance of \srfec\ in
these two scaling regimes. We show that, the asymptotic receiver's
buffer occupancy grows by a factor of $\log W$ slower in the second
scaling regime compared to the first scaling regime.

\bigskip
\noindent\textbf{Scaling Regime I: Constant Coding Block Size\\}
We show that, in this case, the receiver's buffer occupancy has the same scaling as with \srarq.
\begin{theorem}
\label{t:fecB}
    $\E{N\fec} \in \Theta(\log W)$.
\end{theorem}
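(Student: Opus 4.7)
The plan is to mirror the \srarq\ argument, replacing geometric random variables by negative binomials and exponentials by Erlangs. First, I would observe that $N\fec = \max_{1 \le i \le M} N_i$ where $M = W/B \in \Theta(W)$, and each $N_i$ is distributed as the sum of $B$ i.i.d.\ geometric random variables with parameter $1-p$. Since $B \in \Theta(1)$, this is a sum of a constant number of geometrics, so we can approximate each $N_i$ by an Erlang random variable $X_i$ with shape parameter $B$ and rate $\lambda = -\ln p$ (equivalently, a sum of $B$ i.i.d.\ exponentials, each approximating one geometric as in the proof of Theorem~1).

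Next, I would bound the approximation error. Each geometric-to-exponential substitution incurs a constant error $\epsilon = \frac{1}{1-p} + \frac{1}{\ln p}$; summing $B$ of these gives a per-block error of $B\epsilon \in \Theta(1)$. Because the maximum of nonnegative random variables is monotone in each coordinate, $\lvert\E{N\fec}-\E{X}\rvert \le B\epsilon$, where $X = \max_{1 \le i \le M} X_i$. Thus the problem reduces to computing $\E{X}$, the expected maximum of $M$ i.i.d.\ Erlang$(B,\lambda)$ random variables, up to a constant additive term.

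The core step is to place the Erlang distribution in the Gumbel domain of attraction. For an Erlang$(B,\lambda)$ distribution, the tail satisfies $1-F(x) = e^{-\lambda x} \sum_{k=0}^{B-1} (\lambda x)^k/k!$ with density $f(x) = \lambda^B x^{B-1} e^{-\lambda x}/(B-1)!$, so $(1-F(x))/f(x) \to 1/\lambda$ as $x \to \infty$ and its derivative tends to $0$; hence $F$ is a von Mises function. Using~\eqref{e:ab} with $n = M$, I would solve $1 - F(b_M) = 1/M$ asymptotically. Writing $b_M = \lambda^{-1}(\ln M + y_M)$ and matching terms yields $y_M = (B-1)\ln\ln M + O(1)$, and $a_M = (1-F(b_M))/f(b_M) \to 1/\lambda$. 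Then~\eqref{e:mean} gives
\begin{equation*}
\E{X} = a_M \gamma + b_M = \frac{1}{\lambda}\bigl(\ln M + (B-1)\ln\ln M + O(1)\bigr).
\end{equation*}
Since $B \in \Theta(1)$ and $M = W/B$, both terms are $\Theta(\log W)$, and combining with the constant approximation error from step two gives $\E{N\fec} \in \Theta(\log W)$.

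The main obstacle is the third step: verifying the von Mises condition cleanly for the Erlang tail and extracting the correct leading-order asymptotics of $b_M$ from the implicit equation $e^{-\lambda b_M}\sum_{k=0}^{B-1}(\lambda b_M)^k/k! = 1/M$. The $(B-1)\ln\ln M$ correction term could cosmetically obscure the $\log W$ scaling, but because $B$ is fixed and $\ln\ln M \in o(\log W)$, it is absorbed into the $\Theta$ bound without difficulty. Everything else (the approximation argument, monotonicity of the max, and applying~\eqref{e:mean}) is a direct transcription of the Theorem~1 proof with $B$ substituted for $1$.
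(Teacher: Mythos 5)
Your proposal is correct and follows essentially the same route as the paper: decompose each block's transmission count into $B$ i.i.d.\ geometrics, approximate by an Erlang$(B,\lambda)$ with $\lambda=-\ln p$ and a $\Theta(B)=\Theta(1)$ additive error, verify the von Mises condition to place the Erlang in the Gumbel domain of attraction, and extract $b_M=\lambda^{-1}(\ln M+(B-1)\ln\ln M+O(1))$ and $a_M=\lambda^{-1}$ to conclude $\E{N\fec}\in\Theta(\log W)$. The only cosmetic difference is that you make the error-propagation step through the maximum slightly more explicit than the paper does.
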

\begin{proof}
Recall that the number of transmissions $N_i$ for coding block $i$
has a negative binomial distribution. In an alternative but
equivalent representation, $N_i$ can be considered as the sum of $B$
IID geometric random variables with the mean $1/(1-p)$. Each
geometric random variable represents the number of transmissions at
the sender until the next required coded packet arrives at the
receiver. That is,
\begin{equation}
\label{e:Ni-fec-expand}
    N_i = N_{i1} + \cdots + N_{iB},
\end{equation}
where, $N_{ik}$ is the number of transmissions of packets for coding
block $i$ at the sender until the receiver receives the $k$th coded
packet, given that it has already received $k-1$ coded packets. For
geometric random variables we have
\begin{equation}
    \pr{N_{ik} = n} = (1-p)p^{i-n}
    \eqend
\end{equation}
Following the approximation method described in the previous section,
we approximate each geometric random variable $N_{ik}$ with an
exponentially distributed random variable $X_{ik}$ with the rate
$\lambda = -\ln p$. Define $X_i$ and $X$ as follows:
\begin{equation}
    X_i = X_{i1} + \cdots + X_{iB},
\end{equation}
and, for $M$ coding blocks in a window,
\begin{equation}
    X = \max_{1 \le i \le M} X_i
    \eqend
\end{equation}
Our goal is to approximate $\E{N\fec}$ with $\E{X}$. From the
previous section, the approximation error is given by
\begin{equation}
\label{e:err}
    \epsilon = B \cdot \big( \frac{1}{1-p} + \frac{1}{\ln p} \big)
    \in  \Theta(B),
\end{equation}
which is constant.

Next, we turn our attention to computing $\E{X}$. Notice that $X_i$
is the summation of $B$ IID exponentially distributed random
variables. Thus, $X_i$ is Erlang distributed with the following
cumulative distribution function:
\begin{equation*}
    F(x) = \pr{X_i \le x} = 1 - e^{-\lambda x} \sum_{k=0}^{B-1} \frac{(\lambda x)^k}{k!}
    \eqend
\end{equation*}

\begin{lemma}
The distribution function of the sum of $B$ IID exponentially distributed random variables is a von Mises function.
\end{lemma}
\begin{proof}
We need to show that condition~\eqref{e:mises} holds for the Erlang distribution. Equivalently, the following condition should be satisfied,
\begin{equation}
\label{e:alt}
    \lim_{x \rightarrow \infty} \frac{f'(x) (1-F(x))}{f^2(x)} = -1,
\end{equation}
where $f$ and $f'$ denote the probability density function of Erlang distribution and its derivative respectively. To this end, we have
\begin{equation*}
    f(x) = \frac{\lambda^B x^{B-1} e^{- \lambda x}}{(B-1)!}, \quad
    f'(x) = \big(\frac{B-1}{x} - \lambda \big) f(x)
    \eqend
\end{equation*}
After substitution in~\eqref{e:alt}, the lemma follows.
\end{proof}

Therefore, there exist constants $a_M > 0$ and $b_M$ so that
$(X-b_M)/a_M$ converges in distribution to the standard Gumbel
distribution, as $M \rightarrow \infty$. A possible choice for $a_M$
and $b_M$ is given by~\eqref{e:ab}. Specifically, $b_M$ can be
computed by solving the equation $b_M = F^{-1}(1-1/M)$. Using the
following representation of the Erlang distribution,
\begin{equation*}
    F(x) = 1 - e^{- \lambda x} \Big[ \frac{(\lambda x)^{B-1}}{(B-1)!} + \Theta(x^{B-2}) \Big],
\end{equation*}
it is obtained that,
\begin{equation*}
    \frac{1}{M} = e^{- \lambda b_M} \Big[ \frac{(\lambda b_M)^{B-1}}{(B-1)!} + \Theta(b_M^{B-2}) \Big]
    \eqend
\end{equation*}
By taking the logarithm of both sides of the equation, and using the relation
    $\ln(r+t) = \ln r + \ln(1 + \frac{t}{r})$, we have,
\begin{equation*}
    \lambda b_M = \ln M + (B-1) \ln (\lambda b_M) - \ln (B-1)! + o(1),
\end{equation*}
which yields the following expression for $b_M$,
\begin{equation*}
    b_M = \lambda^{-1} (\ln M + (B-1) \ln\ln M - \ln (B-1)!)
    \eqend
\end{equation*}
By substituting $b_M$ in~\eqref{e:ab}, it is then obtained that $a_M
= \lambda^{-1}$. Substituting $a_M$ and $b_M$ in~\eqref{e:mean}, the
following expression is obtained for $\E{X}$,
\begin{equation}
    \E{X} = \frac{\ln W + (B-1) \ln\ln W + \Theta(B \ln B)}{-\ln p},
\end{equation}
as $W \rightarrow \infty$. Given the constant approximation
error~\eqref{e:err}, it is obtained that $\E{N\fec} \in \Theta(\log
W)$. Since $B$ is constant, the number of per packet transmissions,
\ie\ $\tfrac{1}{B} \E{N\fec}$, also is in $\Theta(\log W)$.
\end{proof}

Using $\E{N\fec}$, we show that $\E{Q\fec} \in \Theta(W \log W)$. The
proof is similar to the proof presented for $\E{Q\arq}$ in the
previous section, and hence only a sketch of the proof is presented
in the following lemmas.

\begin{lemma}
\label{l:up-coding}
    $\E{Q\fec} \in O(W \log W)$.
\end{lemma}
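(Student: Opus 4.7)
The plan is to follow the template of Lemma~\ref{l:up}, lifted from individual packets to coding blocks. The strategy is to identify the worst-case configuration that maximizes the re-sequencing buffer occupancy, establish a pathwise bound of the form $Q\fec \le (W-1)\, N\fec$, take expectations, and then invoke Theorem~\ref{t:fecB}, which already established that $\E{N\fec} \in \Theta(\log W)$.

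Concretely, I would first let $\ell$ denote the block in the current window that attains $N\fec = \max_i N_i$, and focus on the worst case in which $\ell$ is the block with the smallest block sequence number and every other block in the window is decoded essentially as soon as its $B$ coded packets arrive. Under this configuration, every coded packet for a later block that reaches the receiver must remain in the re-sequencing buffer until block $\ell$ is decoded, because in-order delivery of any block is blocked by $\ell$. Second, I would appeal to the window constraint of at most $W$ outstanding coded packets to argue that between consecutive sender transmissions of coded packets for block $\ell$, at most $W-1$ coded packets for later blocks are transmitted (and, in the worst case for buffering, all of them are received successfully). Combining these, since the sender performs at most $N\fec$ transmissions associated with block $\ell$ before it is decoded, a pathwise counting argument yields $Q\fec \le (W-1)\, N\fec$. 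Taking expectations and applying Theorem~\ref{t:fecB} gives $\E{Q\fec} \le (W-1)\,\E{N\fec} \in O(W \log W)$.

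The main obstacle I anticipate is step two: unlike \srarq, under \srfec\ several outstanding coded packets for block $\ell$ may be in flight simultaneously, since any $B$ independent receptions suffice for decoding, and the sender in \alg~\ref{a:fec} may prefetch several coded packets for the active block. Consequently, the clean ``one slot of the window is devoted to $\ell$'' bookkeeping of Lemma~\ref{l:up} does not apply verbatim. For the $O(W \log W)$ upper bound, however, this does not matter: replacing $W-1$ by $W$ loses at most a constant factor, since block $\ell$ can occupy at most $B$ slots of the window and $B$ is either constant (Regime~I) or at most $W$ (Regime~II). The paper explicitly states that only a sketch is required, so presenting the worst-case configuration, the pathwise bound $Q\fec \le (W-1)\, N\fec$, and the invocation of Theorem~\ref{t:fecB} should be sufficient.
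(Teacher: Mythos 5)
Your proposal matches the paper's own (sketch) proof: both take the worst case where the block requiring the most transmissions is the one with the smallest sequence number, bound the buffer pathwise by $(W-1)\,N\fec$ since each transmission for that block admits at most $W-1$ packets of later blocks into the re-sequencing buffer, and conclude via $\E{N\fec} \in \Theta(\log W)$ from Theorem~\ref{t:fecB}. Your additional remark that multiple in-flight coded packets for the blocking block cost at most a constant factor is a reasonable tightening of a detail the paper glosses over.
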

\begin{proof}
In the worst-case, the coding block that requires the most number of
packet transmissions is the block with the smallest sequence number
in the window. For every packet transmission for this block, $W-1$
other packets are transmitted for the proceeding blocks. Thus, the
expected receiver's buffer occupancy is upper bounded by $W \log
W$.
\end{proof}

\begin{lemma}
\label{l:low-coding}
    $\E{Q\fec} \in \Omega(W \log W)$.
\end{lemma}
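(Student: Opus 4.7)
The plan is to mirror the SR-ARQ lower bound from Lemma~\ref{l:low}, substituting coding blocks for individual packets. Since Theorem~\ref{t:fecB} already establishes $\E{N\fec} \in \Theta(\log W)$, what remains is to show that for a $\Theta(\log W)$ fraction of the time spent waiting for the slowest block to finish decoding, at least $\Omega(W)$ coded packets are held in the re-sequencing buffer. Concretely, I would let $\ell$ denote the coding block whose decoding takes $N\fec$ transmissions and consider the best case (most favorable to reducing buffer occupancy) in which $\ell$ is the block with the largest sequence number in the window, so that every other block in the window can be decoded and delivered in-order as soon as it collects its $B$th coded packet.

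Next, I would bound the number of transmission epochs $\tau$ needed to decode the first $M/2 = W/(2B)$ blocks of the window. In the best case this amounts to accumulating $BM/2 = W/2$ successful coded packet receptions distributed across $M/2$ active blocks. Using the exponential approximation from the proof of Theorem~\ref{t:fecB}, and the fact that up to $W$ transmissions may be outstanding simultaneously, the expected waiting time for the $k$th successful reception is at most $1/((W-k+1)\lambda)$, so that $\E{\tau} \le \frac{1}{\lambda}\sum_{k=1}^{W/2}\frac{1}{W-k+1} \le (\ln 2)/\lambda \in O(1)$, independently of $W$. Once these $M/2$ blocks have been decoded and delivered, the window has slid forward, so every subsequent arrival at the receiver is a coded packet for a block whose sequence number exceeds that of $\ell$ and which must therefore sit in the buffer until $\ell$ itself is decoded.

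Putting the pieces together, for each of the remaining $N\fec - \tau$ transmission epochs at least $W/2$ coded packets occupy the re-sequencing buffer, yielding $Q\fec \ge (W/2)(N\fec - \tau)$. Taking expectations and substituting $\E{\tau} \in O(1)$ together with $\E{N\fec} \in \Theta(\log W)$ from Theorem~\ref{t:fecB} gives $\E{Q\fec} \ge (W/2)(\E{N\fec} - O(1)) \in \Omega(W \log W)$, as required.

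The main obstacle I anticipate is the $\tau$ bound in the second step: one must check carefully that rateless coding does not slow down the best-case decoding of the first $M/2$ blocks relative to the SR-ARQ analog. In the constant-$B$ regime this turns out to be benign because the assumption made in Section~\ref{s:exact} that the sender never transmits a coded packet dependent on previously sent ones ensures that every received coded packet advances the rank of its block by exactly one; hence counting useful receptions is structurally identical to counting successful packet arrivals in SR-ARQ, and the rest of the argument is a direct transcription of Lemma~\ref{l:low} with only cosmetic changes.
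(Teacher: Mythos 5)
Your proposal is correct and follows essentially the same route as the paper's own (sketched) proof: place the slowest block at the end of the window, bound the expected time $\tau$ to decode and deliver the first $M/2$ blocks by a constant via the exponential approximation, and then charge $W/2$ buffered packets to each of the remaining $\E{N\fec}-O(1) \in \Theta(\log W)$ transmission epochs. Your write-up merely fills in the details that the paper delegates to the analogous Lemma~\ref{l:low} for \srarq, including the correct observation that the independence of transmitted coded packets makes the counting of useful receptions identical to the ARQ case.
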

\begin{proof}
Assume that the coding block with the largest sequence number is at
the end of the window. Thus, all coding blocks before it can be
potentially delivered to the application if they are receiver at the
receiver in-order. Consider the time it takes for the first $M/2$
coding blocks to be delivered to the application. In the best-case,
these coding blocks arrive at the receiver in an increasing order.
Thus, the best-case time to receive the first $M/2$ blocks is upper
bounded by the time it takes to receive $W/2$ coded packets belonging
to the first half of the window. In the previous section, we showed
that this time is upper bounded by a constant. Thus, the expected
receiver's buffer occupancy is lower bounded by $W \log W$.
\end{proof}

\begin{theorem}
\label{t:fec1}
    $\E{Q\fec} \in \Theta(W \log W)$.
\end{theorem}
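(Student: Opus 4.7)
The plan is to obtain the $\Theta(W \log W)$ scaling by sandwiching $\E{Q\fec}$ between matching upper and lower bounds, exactly in the style used to establish $\E{Q\arq} \in \Theta(W \log W)$ in the previous subsection. Since Lemmas~\ref{l:up-coding} and~\ref{l:low-coding} already assert $\E{Q\fec} \in O(W \log W)$ and $\E{Q\fec} \in \Omega(W \log W)$, respectively, the theorem is immediate from their combination, and the proof reduces to a one-line reference.

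For completeness I would briefly recap the two directions. The upper bound is obtained by worst-casing the block dynamics: assume the slowest-to-decode block is the one with the smallest sequence number in the window. Between successive transmissions devoted to this straggler block, up to $W-1$ coded packets for later blocks get dispatched and must all sit in the re-sequencing buffer until the straggler is decoded. Taking expectations and invoking Theorem~\ref{t:fecB}, which supplies $\E{N\fec} \in \Theta(\log W)$, yields $\E{Q\fec} \le (W-1)\,\E{N\fec} \in O(W \log W)$.

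For the lower bound I would place the slowest block at the tail of the window, so that every earlier block has the opportunity to leave the buffer in order. Using the fact that the minimum of $n$ \iid\ exponentials with rate $\lambda$ has mean $1/(n\lambda)$, a short harmonic-sum estimate shows that the expected time $\E{\tau}$ needed for the first $M/2$ blocks to be received and released is bounded by a constant independent of $W$. After this transient phase, each of the remaining $\E{N\fec}-\E{\tau}$ transmission slots contributes at least $W/2$ packets that must be buffered behind the tail block, giving $\E{Q\fec} \ge (W/2)(\E{N\fec}-\E{\tau}) \in \Omega(W \log W)$.

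The only delicate point, and the one I would flag as the main obstacle in principle, is the mismatch between block-level decoding and packet-level buffering: the re-sequencing queue accumulates individual coded packets but is drained only at block boundaries. Because $B \in \Theta(1)$, however, the discrepancy between per-packet and per-block accounting is absorbed into a constant factor and does not alter the $\Theta(W \log W)$ scaling. Combining the two bounds establishes the theorem.
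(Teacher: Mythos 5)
Your proposal is correct and follows essentially the same route as the paper: the theorem is proved there in one line by combining Lemmas~\ref{l:up-coding} and~\ref{l:low-coding}, whose sketches match your recap of the worst-case straggler block (upper bound) and the tail-block-plus-constant-transient argument (lower bound). Your remark that the packet-versus-block accounting mismatch is absorbed into a constant because $B \in \Theta(1)$ is a useful clarification the paper leaves implicit, but it does not change the argument.
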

\begin{proof}
The theorem follows from Lemmas~\ref{l:up-coding}
and~\ref{l:low-coding}.
\end{proof}

\bigskip
\noindent\textbf{Scaling Regime II: Growing Coding Block Size\\}
Recall that in this case, we have $B \in \Theta(W)$ so that $M$ is a
constant. We show that, in this case, the expected number of
transmissions per packet is constant. Consequently, we show that
$\E{Q\fec} \in \Theta(W)$. The proof relies on the Lebesgue's Dominated Convergence Theorem stated below.

\begin{theorem}[Dominated Convergence Theorem]
Suppose that $X_n \rightarrow X$ almost surely as $n \rightarrow
\infty$. If there exists a random variable $Y$ having finite
expectation and $|X_n| \le Y$, for all $n$, then $\E{X_n} \rightarrow
\E{X}$, as $n \rightarrow \infty$
\end{theorem}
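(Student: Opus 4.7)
The plan is to derive the Dominated Convergence Theorem from Fatou's Lemma, which is the standard measure-theoretic reduction. Fatou's Lemma asserts that $\E{\liminf_n Z_n} \le \liminf_n \E{Z_n}$ for any sequence of non-negative measurable random variables $Z_n$. The idea is to exploit the dominating bound $|X_n| \le Y$ to build two non-negative auxiliary sequences, namely $Y + X_n$ and $Y - X_n$, to which Fatou can be applied separately and then combined.

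First, I would check that the limit $X$ is itself integrable. Passing the almost-sure bound $|X_n| \le Y$ to the limit (after discarding the null set on which convergence or domination fails) gives $|X| \le Y$ almost surely, and therefore $\E{|X|} \le \E{Y} < \infty$. This justifies subtracting the finite quantity $\E{Y}$ from the inequalities that follow, and it is the precise reason why the integrability of $Y$ is essential.

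Next, I would apply Fatou's Lemma twice. For the non-negative sequence $Y + X_n \to Y + X$, Fatou gives $\E{Y} + \E{X} \le \liminf_n (\E{Y} + \E{X_n})$, which rearranges to $\E{X} \le \liminf_n \E{X_n}$. Applying the same reasoning to the non-negative sequence $Y - X_n \to Y - X$ yields $\E{Y} - \E{X} \le \E{Y} - \limsup_n \E{X_n}$, i.e., $\limsup_n \E{X_n} \le \E{X}$. Chaining the two bounds produces $\limsup_n \E{X_n} \le \E{X} \le \liminf_n \E{X_n}$, forcing $\lim_n \E{X_n} = \E{X}$.

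The main obstacle, assuming Fatou's Lemma as a black box, is merely the bookkeeping around almost-sure (rather than everywhere) convergence; this is handled by discarding the null set on which either convergence or domination fails, which does not affect any expectation. If Fatou were not available, the substantive work would be to derive it from the Monotone Convergence Theorem by considering the non-decreasing sequence $W_n = \inf_{k \ge n} Z_k$, but this is a standard construction unrelated to the queueing application at hand, so I would simply cite it from a measure-theory reference.
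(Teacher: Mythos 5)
Your proof is correct: it is the canonical derivation of Lebesgue's Dominated Convergence Theorem from Fatou's Lemma, applying Fatou to the non-negative sequences $Y + X_n$ and $Y - X_n$ to sandwich $\limsup_n \E{X_n} \le \E{X} \le \liminf_n \E{X_n}$, and you correctly flag the two points that need care --- that $|X| \le Y$ almost surely makes $\E{X}$ finite so that $\E{Y}$ may be cancelled, and that the null set where convergence or domination fails is harmless. For comparison, the paper offers no proof at all: it states the theorem as a known measure-theoretic tool and immediately applies it (together with the strong law of large numbers) to interchange limit and expectation in the analysis of $\E{N\fec}$ for the regime $B \in \Theta(W)$. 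So there is no competing argument in the paper to measure yours against; your writeup simply supplies the standard justification that the paper delegates to a textbook reference, and it does so without gaps.
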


\begin{theorem}
    $\E{N\fec} \in \Theta(W)$.
\end{theorem}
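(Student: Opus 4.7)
The plan is to exploit the law of large numbers together with the Dominated Convergence Theorem now that $M$ is fixed and only $B=W/M$ grows. Concretely, I will show that $N\fec/B$ converges almost surely and in $L^1$ to $1/(1-p)$, which immediately yields $\E{N\fec} \in \Theta(B) = \Theta(W)$.

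First I would use the decomposition~\eqref{e:Ni-fec-expand} to write, for each $i \in \{1,\ldots,M\}$,
\begin{equation*}
    \frac{N_i}{B} = \frac{1}{B} \sum_{k=1}^{B} N_{ik},
\end{equation*}
where the $N_{ik}$ are IID geometric with mean $1/(1-p)$. By the strong law of large numbers, $N_i/B \to 1/(1-p)$ almost surely as $B \to \infty$. Since $M$ is a constant independent of $W$, the maximum of finitely many almost-surely convergent sequences converges almost surely to the maximum of the limits, so
\begin{equation*}
    \frac{N\fec}{B} = \max_{1 \le i \le M} \frac{N_i}{B} \;\xrightarrow{\text{a.s.}}\; \frac{1}{1-p}.
\end{equation*}

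Next I would apply the Dominated Convergence Theorem to upgrade this to convergence of expectations. The dominating variable is simply the sum, since the maximum is bounded by the sum of nonnegative quantities:
\begin{equation*}
    0 \le \frac{N\fec}{B} \le Y := \sum_{i=1}^{M} \frac{N_i}{B}, \qquad \E{Y} = \frac{M}{1-p} < \infty.
\end{equation*}
The bound $\E{Y}$ does not depend on $B$, so $Y$ is an admissible dominating random variable (taken on the common probability space on which all $N_{ik}$ are defined). Therefore $\E{N\fec/B} \to 1/(1-p)$, i.e., $\E{N\fec} = B/(1-p) + o(B)$, which gives $\E{N\fec} \in \Theta(B) = \Theta(W)$ since $B = W/M$ with $M$ constant.

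The main obstacle I anticipate is not the convergence itself but ensuring the dominating function has finite expectation uniformly in $B$; using the sum bound $\max \le \sum$ sidesteps this cleanly because each summand has mean exactly $1/(1-p)$ regardless of $B$. A secondary subtlety is that formally the $N_i$ for different values of $W$ live on different probability spaces, but since the distribution of $N\fec/B$ is what matters, one can either invoke the Skorokhod representation or simply verify convergence in distribution plus uniform integrability (which follows from $\sup_B \E{(N\fec/B)^2} < \infty$, obtainable from the variance of the negative binomial). Either route suffices to conclude $\E{N\fec} \in \Theta(W)$.
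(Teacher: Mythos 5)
Your proposal is correct and follows essentially the same route as the paper's own proof: the decomposition of $N_i$ into $B$ IID geometric variables, the strong law of large numbers for $N_i/B$, and the Dominated Convergence Theorem with the same dominating variable $Y=\sum_{i}N_i/B$. Your closing remarks on uniform integrability and the dependence of the dominating variable on $B$ are in fact slightly more careful than the paper's treatment, but they do not change the argument.
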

\begin{proof}
Define $N$ as the number of transmissions per packet until the entire
window is received at the receiver, that is, $N = \tfrac{1}{B}
N\fec$. Using the expansion introduced in~\eqref{e:Ni-fec-expand}, we
are interested in characterizing $\E{N}$,
\begin{equation}
\label{eq:N}
    \E{N} = \frac{1}{B} \E{\max_{1\le i \le M} N_i} = \E{\max_{1\le i \le M} \frac{N_i}{B}}
    \eqend
\end{equation}
Now, we rewrite \eqref{eq:N} as follows,
\begin{equation}
\begin{split}
    \E{N\fec}
            &= \E{\max_{1\le i \le M} \frac{N_i}{B}}\\
            &= \E{\max_{1\le i \le M} \frac{N_{i1} + \cdots + N_{iB}}{B}}
            \eqend
\end{split}
\end{equation}

Note that the $N_{ik}$'s are IID and hence, by applying the strong law of large numbers, $\tfrac{N_i}{B} \rightarrow \frac{1}{1-p}$, as $B \rightarrow \infty$, almost surely.
Thus, $N \rightarrow \frac{1}{1-p}$ almost surely as $B \rightarrow
\infty$. Define random variable $Y$ as $Y = \sum_{1\le i \le M} \frac{N_i}{B}$. Notice that $N \le Y$ for all values of $B$, and that $\E{Y} = \tfrac{M}{1-p} < \infty$. Therefore, the Lebesgue's Dominated Convergence Theorem can be applied, which yields,
\begin{equation}
\label{eq:nc}
\begin{split}
    \lim_{B \rightarrow \infty} \E{N}
        &= \E{\max_{1\le i \le M} \lim_{B \rightarrow \infty} \frac{N_{i1} + \cdots + N_{iB}}{B}}, \\
        &= \E{\max_{1\le i \le M} \frac{1}{1-p}}
        = \frac{1}{1-p}
        = \Theta(1)
        \eqend
\end{split}
\end{equation}
It then follows that $\E{N\fec} = B/(1-p)$, which completes the
proof.
\end{proof}

\begin{theorem}
\label{t:fec2}
    $\E{Q\fec} \in \Theta(W)$.
\end{theorem}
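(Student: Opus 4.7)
The plan is to mirror the strategy used for Theorems~\ref{t:arq} and~\ref{t:fec1}: establish matching upper and lower bounds, $\E{Q\fec} \in O(W)$ and $\E{Q\fec} \in \Omega(W)$, each as a short lemma, and then combine them. The key new ingredient is the result just proved that $\E{N\fec} = B/(1-p) \in \Theta(W)$, which collapses the extra $\log W$ factor that appeared in Regime~I.

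For the upper bound, I would argue that at any instant the re-sequencing buffer contains at most $W$ coded packets. This follows from the sliding-window constraint of at most $W$ outstanding packets and from the fact that a coded packet occupies the buffer only while there exists an earlier block that is not yet decoded and delivered. The worst case corresponds to the scenario analogous to Lemma~\ref{l:up-coding}: the slowest-to-decode block has the smallest block sequence number, so every coded packet received for the remaining $M-1$ blocks (at most $(M-1)B = W-B$ of them) sits in the buffer, plus up to $B$ packets for the slowest block itself, giving $Q\fec \le M B = W$. Taking expectations yields $\E{Q\fec} \le W \in O(W)$.

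For the lower bound, I would follow the pattern of Lemma~\ref{l:low}, but now the accumulation that drives the bound comes from within a single slowest block rather than from $W$ out-of-order sequence numbers. Consider the best case, in which the slowest-to-decode block $\ell$ has the largest block sequence number and every earlier block is received and decoded in order. Even then, the $B$ coded packets received for block $\ell$ itself must sit in the buffer until the $B$th innovative packet arrives, because the block can only be delivered once it is full rank. During the time interval in which coded packets for block $\ell$ are being accumulated, the number of block-$\ell$ packets in the buffer grows from $1$ to $B$. Its time-average over this interval is $\Theta(B)$, and by the argument used in Lemma~\ref{l:low} together with $\E{N\fec} \in \Theta(B)$, the total contribution to $\E{Q\fec}$ is at least a constant multiple of $B = W/M \in \Omega(W)$.

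The main obstacle is the lower bound: the $\Theta(\log W)$ gap between Regimes~I and~II emerges precisely because in Regime~II there is no longer a multiplicative interaction between the extreme-value scaling of $N\fec$ and the window size $W$. I need to be careful that the time-averaging argument does not accidentally pick up a factor of $M$ or $B$ that would inflate the bound; concretely, I would invoke Little's law on the slowest block in isolation --- noting that $B$ packets are delivered after an expected sojourn of $\Theta(B)$ transmission slots, so the associated mean buffer contribution is $\Theta(B)$ packets --- and then combine this with the $\Theta(W)$ throughput to recover $\E{Q\fec} \in \Omega(W)$. Combining the two lemmas then gives the theorem immediately.
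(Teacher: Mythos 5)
Your lower bound is sound and arguably more explicit than what the paper provides: applying Little's law to the slowest block, whose $B$ innovative packets accumulate over an expected $\Theta(B)$ transmission slots within a window cycle of length $\Theta(W)$, does yield $\E{Q\fec} \in \Omega(B) = \Omega(W)$ when $M$ is constant. The genuine gap is in your upper bound. The claim that the re-sequencing buffer deterministically holds at most $MB = W$ packets is false for this protocol: the window constraint of $W$ applies to \emph{outstanding} (unacknowledged) packets at the sender, whereas packets sitting in the receiver's re-sequencing buffer have already been ACKed and released from the sender's window. Consequently, while the slowest block remains undecoded, blocks $M+1, M+2, \ldots$ enter the sender's window and their packets also accumulate at the receiver. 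This is precisely the mechanism that produces $\Theta(W \log W)$ rather than $\Theta(W)$ in Lemma~\ref{l:up-coding} and for \srarq, under the identical ``at most $W$ outstanding'' rule; if your capacity argument were valid it would apply verbatim in Regime I and contradict Theorem~\ref{t:fec1}.

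To close the gap you need the observation the paper's proof is built on: since $M$ is constant and $N_i/B \rightarrow 1/(1-p)$ almost surely for every block (the law-of-large-numbers step of the preceding theorem), all $M$ blocks are decoded at essentially the same time, so the overhang --- the transmissions that occur after the earlier blocks are released from the sender's window but before the slowest block completes --- is $o(B) = o(W)$, and the packets of later blocks buffered during that overhang contribute only $o(W)$ to $\E{Q\fec}$. With that concentration step, your bound becomes ``at most the $W$ packets of the current window plus a vanishing overhang,'' which gives $O(W)$, and combined with your lower bound the theorem follows. The paper compresses all of this into the single sentence that all coding blocks are decoded at the same time, so your two-lemma decomposition is a reasonable and more careful route; but the upper-bound lemma as you state it rests on a false premise and must be repaired along these lines rather than by appeal to the window size alone.
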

\begin{proof}
Notice that in this case, all coding blocks are decoded at the same
time. Thus, the expected buffer occupancy is given by $W/(1-p)$,
which established the theorem.
\end{proof}

\section{Extensions and Implications} \label{s:ext}

\begin{table*}
\caption{Asymptotic Performance of Selective Repeat Protocols.}
\label{t:summary}
\centering
\begin{tabular}{|l|c|c|c|c|}
\hline
Protocol    &   Throughput  &   Reseq. Buffer &   Reseq. Delay  & Feedback Overhead\\
\hline
\hline
\srarq\     &   $(1-p) C$   &   $\Theta(W \log W)$  &   $\Theta(\log W)$  &  $\Theta(\log W)$\\
\hline
\srfec, $B=\Theta(1)$ & $(1-p) C$ & $\Theta(W \log W)$  &   $\Theta(\log W)$   &  $\Theta(\log W)$ \\
\hline
\srfec, $B=\Theta(W)$ & $(1-p) C$ & $\Theta(W)$  &   $\Theta(1)$   &  $\Theta(1)$ \\
\hline
\end{tabular}
\end{table*}

\subsection{Lossy Feedback}
If an ACK packet is lost, the sender will time-out and retransmit a packet.
As a result, the throughput achieved by the reliability protocols may suffer as some packets are unnecessarily transmitted due to lost ACKs. However, the receiver buffer requirement is not affected by lost ACKs as the packet corresponding to the lost ACK is buffered at the receiver like every other packet.

Assume that the ACK packets are lost with probability $p_a$. At the sender, a packet is considered lost if either the packet or its ACK is lost.
With \srarq, the sender retransmits the same packet after every time-out until it receives an ACK for that packet. Thus, the throughput achieved by \srarq\ is $\rho = (1-p)(1-p_a) C$. With \srfec, however, after every time-out a new coded packet is transmitted that could potentially be useful for decoding the corresponding block at the receiver. Only when the corresponding block is (or will be) full rank then time-outs are wasteful. Thus, \srfec\ potentially achieves higher throughput compared to \srarq.

By using more sophisticated feedback mechanisms, however, one could improve the throughput of both protocols to $\rho = (1-p) C$, without affecting the receiver buffer requirement. One such mechanism is described next. Our current protocols incur feedback overhead of $\Theta(\log W)$ per-packet (see subsection~\ref{s:overhead}). As long as the packet length grows faster than the per-packet feedback overhead, there will not be any loss in throughput asymptotically. To this end, we show that the receiver can send $(1+\epsilon) \log W$ ACKs per each received packet, for any $\epsilon > 0$, and still maintain negligible feedback overhead if the packet size scales as $\omega(\log^2 W)$.

Suppose that the receiver sends $n$ ACKs back-to-back for every received packet. The achieved throughput is $\rho = (1-p)(1 - p_a^n) C$.
To achieve full throughput, the throughput loss $p_a^n C$ should converge to $0$ as $W$ grows to infinity. That is, $W p_a^n \rightarrow 0$, as $W \rightarrow \infty$, or equivalently, $p_a^n = o(\frac{1}{W})$, which yields,
\begin{equation}
\textstyle
    n = \log_{p_a} o(\frac{1}{W}) = \Omega(\log_{p_a} \frac{1}{W^{1+\epsilon}}),
\end{equation}
for any $\epsilon > 0$.

Specifically, transmitting $(1+\epsilon)\log W$ redundant ACKs ensures asymptotically full throughput for $p_a < 2^{-\frac{1}{1+\epsilon}}$. For instance, for $p_a < 0.5$, which is the case of interest in this work (TCP does not work for loss probabilities higher than $10\%$), transmitting $\log W$ redundant ACKs suffices to achieve throughput $\rho=(1-p) C$, as $W \rightarrow \infty$.

\subsection{Dependent Coded Packets}
In the preceding sections, we assumed that the sender keeps track of
the coded packets transmitted to the receiver in order to avoid
transmitting dependent coded packets. Alternatively, the sender might
simply generate and transmit coded packets regardless of their
dependency. This approach, while having the same buffer requirement,
comes with a throughput penalty, as there is a non-zero
probability that the received packets at the receiver are not
independent. In this subsection, using results from the theory of
random matrices~\cite{blake06random}, we characterize the throughput
loss incurred due to dependent packets.

Let $N_i^*$ denote the number of transmitted coded packets until the
receiver receives $B$ \emph{independent} coded packets for coding
block $i$. Let $N^*_{ik}$ denote the number of transmissions until
the receiver receives the $k$-th independent coded packet given that
it has already received $k-1$ independent coded packets for the
coding block. We have,
\begin{equation}
    N_i^* = N^*_{i1} + N^*_{i2} + \cdots + N^*_{iB}
    \eqend
\end{equation}
There are $B$ packets in a coding block, and hence, there are $2^B$ subsets of packets that can be XORed to create coded packets. While waiting to receive the $k$-th independent coded packet, the receiver has already received $k-1$ independent coded packets. Therefore, $2^{k-1}$ packets out of the $2^B$ possible coded packets will not be independent of the already received $k-1$ packets. Let $\alpha_k$ denote the probability that a received packet
is independent of the existing $k-1$ packets. The following relation
holds,
\begin{equation}
    \alpha_k = \frac{2^B - 2^{k-1}}{2^B} = 1 - 2^{k-B-1}
    \eqend
\end{equation}
A packet (dependent or independent) is received with probability
$1-p$ at the receiver. Therefore, $N^*_{ik}$ is geometrically
distributed with success probability $q_k = \alpha_k (1-p)$. It then
follows that,
\begin{equation}
\label{e:E-N-RC}
\textstyle
    \E{N^*_i} = \sum_{k=1}^{B}  \frac{1}{q_k}
          = \frac{1}{1-p} \Big( B + \sum_{k=1}^{B} \frac{1}{2^k - 1} \Big),
\end{equation}
where the summation term quickly converges to a constant ($\approx
1.606695$) even for small values of $B$. This means that, on average,
the receiver needs only two extra packets in order to be able to
decode a block of packets. Specifically, let $\delta$ denote the
number of \emph{extra} packets (dependent or independent) that will be
delivered to the receiver. Using our earlier argument, the probability that a newly arrived
packet is independent of the previously received $k-1$ independent
packets is given by,
\begin{equation}
    \beta_k = \frac{2^B - 2^{k-1}}{2^{B+\delta}} = 1 - 2^{k-B-\delta-1}
    \eqend
\end{equation}
Thus, the probability of receiving $B$ independent packets out of the
$B+\delta$ received packets, denoted by $\pi(\delta)$, is given by,
\begin{equation}
\textstyle
    \pi(\delta) = \prod_{k=1}^{B} \beta_k = \prod_{k=1}^{B} (1 -
    2^{-k-\delta}),
\end{equation}
where, we have $ \pi(\delta) \ge 1 - \sum_{k=1}^{B}
        2^{-k-\delta} \ge 1 - 2^{-\delta}$.
By taking $\delta = \Theta(\log B)$, we have $\pi(\delta) \rightarrow
1$, as $B \rightarrow \infty$. Next, we compute the throughput loss
because of dependent packets. Let $\Delta \rho$ denote the throughput
loss, we have:
\begin{equation}
\textstyle
    \Delta \rho = \frac{M \cdot \delta}{(1-p) C} = \frac{R}{1-p} \frac{M \cdot \delta}{W}
    \eqend
\end{equation}
Therefore, the throughput loss is given by,
\begin{equation}
    \Delta \rho =
        \begin{cases}
            \frac{R}{1-p} \frac{\log B}{B} = \Theta(1), & \text{if $B = \Theta(1)$},\\
            \frac{R}{1-p} \frac{\log W}{W} = o(1), & \text{if $B = \Theta(W)$} \eqend\\
        \end{cases}
\end{equation}

\subsection{Protocol Overhead}
\label{s:overhead}
Our results for the
re-sequencing buffer occupancy can be used as guidelines for
dimensioning the sender and receiver buffer space. Specifically, for
\srarq\ and \srfec\ with $B = \Theta(1)$, the proper amount of
buffer space at the sender and receiver is $\Theta(W \log W)$. On
the other hand, when $B = \Theta(W)$ under \srfec, the buffer space
at the sender and receiver should be scaled as $\Theta(W)$.

\subsubsection{Data Channel Overhead}
Both protocols require data packets in the sender and receiver
buffers to have unique sequence numbers. Thus, they both incur the
\emph{same} overhead in the forward direction from the sender to the
receiver. Moreover, the packet header overhead due to sequence
numbers is given by $\Theta(\log W)$.

\subsubsection{Feedback Channel Overhead}
Every packet has to be individually ACKed in \srarq. Thus, each ACK
packet carries the sequence number of the packet it is acknowledging.
In \srfec, however, each ACK carries the sequence number of the
corresponding coding block only. Thus, while both protocols send one
ACK per every packet received at the receiver, the header overhead
can be different for each protocol. Specifically, for \srarq\ and
\srfec\ with $B = \Theta(1)$, the per ACK overhead is in
$\Theta(\log W)$. On the other hand, with $B = \Theta(W)$, \srfec\
incurs only $\Theta(1)$ header overhead per ACK.

\subsection{Scaling Packet Length}
To amortize the header overhead of $\Theta(\log W)$, the packet length should also scale with the window size as $\omega(\log W)$. This has implications for both the packet loss probability and throughput, as discussed below.

\subsubsection{Effect on Packet Loss Probability}
Let $L$ denote the packet length and $p_e$ denote the bit error rate (BER) at the physical layer. For the ease of exposition, we ignore the intermediate protocol layers and assume that application packets are passed directly to the physical layer for transmission. The packet loss probability $p$ is then given by,
\begin{equation}
    p = 1 - (1 - p_e)^L = 1 - e^{L \ln (1- p_e)} \approx 1 - e^{- L p_e}
    \eqend
\end{equation}
Thus, as the packet length increases, the packet error probability approaches $1$ exponentially fast. To compensate for the increased packet length in order to keep the packet error probability constant, one can decrease the coding rate at the physical layer by adding more error control data to each packet. The side effect of the decreased coding rate at the physical layer is a lower per-flow capacity at the application layer. Nevertheless, the asymptotic performance of \srarq\ and \srfec\ in terms of buffer requirement remains \emph{unchanged}.

\subsubsection{Effect on Throughput}
Consider a block coding scheme at the physical layer. Define the coding rate $r$ as $r=k/n$, where $n$ and $k$ ($n \ge k$) denote the length of coding blocks and information messages respectively. Since packets are directly passed to the physical layer, each message corresponds to a packet, \ie\ $k = L$. For general classes of codes, it has been shown that~\cite{poor10finite} the maximal coding rate achievable at block length $n$ with error probability $p$ is closely approximated by,
\begin{equation}
    \textstyle
    r = \frac{L}{n} \approx C - \sqrt{\frac{V}{n}} Q^{-1}(p),
\end{equation}
where $C$ is the capacity of the underlying channel, $V$ is a characteristics of the channel referred to as channel dispersion, and $Q$ is the complementary CDF of the standard normal distribution.
Notice that, as $n \rightarrow \infty$, the maximal achievable rate approaches the channel capacity. Thus, a physical layer that is designed optimally, not only does not penalize throughput as the packet length grows but also exhibits even a better performance.

Let assume that our objective is to keep the loss probability constant at $p$ regardless of the packet length. Then, using the above approximation, it is obtained that $n=\Theta(L)$, which means that a constant throughput can be achieved for any packet length. Specifically, for any fixed packet length $L$, let assume that our objective is to achieve the throughput $\rho = (1-p) C$. It is obtained that $n = V \cdot \big( \frac{Q^{-1}(p)}{pC} \big)^2$, indicating that a fixed packet loss probability at a constant throughput can be achieved by appropriately controlling the coding block length at the physical layer.

\subsection{Implications for Protocol Design}
The scaling results for various performance metrics are summarized in
Table~\ref{t:summary}. The conclusion is that \srfec\ with $B =
\Theta(W)$ asymptotically outperforms \srarq\ in terms of buffer
requirement, delay and protocol overhead.

\section{Related Work}
\label{s:related}

The analysis of selective repeat protocols has received significant
attention in the literature. Our primary goal in this paper is to
characterize the \emph{asymptotic} performance of selective repeat
with ARQ and FEC, which has not been considered in the literature. A
summary of some related works follows.

\noindent\textbf{Selective Repeat ARQ:} Some of the early work on the
analysis of the selective repeat protocol with ARQ can be found
in~\cite{towsley79,konheim80arq,anagnos86,ros89}. These classic works
have been extended in several recent papers to consider more general
systems (\eg\ Markovian error models and packet arrival
rates)~\cite{lu93,fantacci96arq,krunz00markov}, specifically in
cellular networks in which selective repeat is typically implemented
between base stations and user devices. All these works have
considered exact or approximate analysis of selective repeat with
ARQ.

\noindent\textbf{Selective Repeat Hybrid ARQ:} Hybrid ARQ schemes
combine ARQ and FEC mechanisms similar to \srfec\ considered in this
paper. The difference is that in hybrid ARQ, the coding rate is
pre-determined based on the transmission environment. For instance, a
block coding scheme such as Reed-Solomon codes is used to generate a
fixed number of coded packets in order to cope with packet losses. In
\srfec, however, coded packets are generated on-demand until the
receiver successfully decodes the original packets (\ie\ rateless
coding). Several examples of analytical work on selective repeat with
hybrid ARQ can be found in~\cite{kallel90arq,zorzi08arq} and
references therein. Note that none of these works has considered the
asymptotic behavior of hybrid ARQ schemes.

\noindent\textbf{Packet Reordering Delay:} The analysis
of~\cite{tse05sequencing} considers reliable data transmission but
assumes that variable network delay is the cause of packet
re-ordering while the network is perfectly reliable in the sense that
no packets are lost. Along the same line,
in~\cite{zheng2010multipath}, it is assumed that the cause of
variable delay is multi-path routing in the network. Their focus is
on capturing the effect of packet reordering on receiver buffer,
which is quite different from the lossy scenario considered in this
paper.

\noindent\textbf{Asymptotic Analysis:}
A large-deviation analysis of receiver buffer behavior under
selective repeat was presented in~\cite{turck2010asymp}. Additionally, using
standard results on coupon collector problem, the authors provide an
asymptotic analysis of the receiver buffer behavior for ARQ
mechanism, which is valid only when the loss rate is \emph{very high}, as
opposed to our results that characterize receiver buffer behavior
under arbitrary loss rates for both ARQ and FEC mechanisms.

\section{Conclusion}
\label{s:conc}

In this paper, we studied reliable data transfer in high-speed
networks, where a large number of flows can be accommodated at high
transmission speeds. We focused on two sliding window mechanisms
called \srarq\ and \srfec\ based on ARQ and FEC, and characterized
their performance in the asymptotic regime of large window sizes.
Specifically, we showed that, while asymptotically achieving equal
throughput, \srfec\ with a large coding block size achieves $\log W$
improvement in buffer requirement and delay compared to \srarq.
However, \srfec\ with a small coding block size is no advantageous
compared to \srarq. It would be interesting to study the performance
of these protocols when the available buffer space is limited.


\begin{thebibliography}{10}
\providecommand{\url}[1]{#1}
\csname url@samestyle\endcsname
\providecommand{\newblock}{\relax}
\providecommand{\bibinfo}[2]{#2}
\providecommand{\BIBentrySTDinterwordspacing}{\spaceskip=0pt\relax}
\providecommand{\BIBentryALTinterwordstretchfactor}{4}
\providecommand{\BIBentryALTinterwordspacing}{\spaceskip=\fontdimen2\font plus
\BIBentryALTinterwordstretchfactor\fontdimen3\font minus
  \fontdimen4\font\relax}
\providecommand{\BIBforeignlanguage}[2]{{%
\expandafter\ifx\csname l@#1\endcsname\relax
\typeout{** WARNING: IEEEtran.bst: No hyphenation pattern has been}%
\typeout{** loaded for the language `#1'. Using the pattern for}%
\typeout{** the default language instead.}%
\else
\language=\csname l@#1\endcsname
\fi
#2}}
\providecommand{\BIBdecl}{\relax}
\BIBdecl

\bibitem{li07tcp}
Y.-T. Li, D.~Leith, and R.~N. Shorten, ``Experimental evaluation of {TCP}
  protocols for high-speed networks,'' \emph{{IEEE/ACM} Trans. Netw.}, vol.~15,
  no.~5, Oct. 2007.

\bibitem{app04}
G.~Appenzeller, I.~Keslassy, and N.~McKeown, ``Sizing router buffers,'' in
  \emph{Proc. ACM SIGCOMM}, Portland, USA, Aug. 2004.

\bibitem{gu07}
Y.~Gu, D.~Towsley, C.~Hollot, and H.~Zhang, ``Congestion control for small
  buffer high bandwidth networks,'' in \emph{Proc. IEEE Infocom}, Anchorage,
  USA, May 2007.

\bibitem{kurose}
J.~Kurose and K.~Ross, \emph{Computer Networking: A Top-Down Approach}.\hskip
  1em plus 0.5em minus 0.4em\relax New Jersey, USA: Addision-Wesley, 2012.

\bibitem{kk05}
O.~Tickoo \emph{et~al.}, ``{LT-TCP}: End-to-end framework to improve {TCP}
  performance over networks with lossy channels,'' in \emph{Proc. IEEE IWQoS},
  Passau, Germany, Jun. 2005.

\bibitem{medard09}
J.~K. Sundararajan \emph{et~al.}, ``Network coding meets {TCP},'' in
  \emph{Proc. IEEE Infocom}, Rio de Janeiro, Brazil, Apr. 2009.

\bibitem{mackay}
D.~MacKay, \emph{Information Theory, Inference, and Learning Algorithms}.\hskip
  1em plus 0.5em minus 0.4em\relax Cambridge, UK: Cambridge University Press,
  2003.

\bibitem{sundaresan03atp}
K.~Sundaresan, V.~Anantharaman, H.-Y. Hsieh, and R.~Sivakumar, ``{ATP}: A
  reliable transport protocol for ad-hoc networks,'' in \emph{Proc. ACM
  Mobihoc}, Annapolis, USA, Jun. 2003.

\bibitem{kohler06dccp}
E.~Kohler, M.~Handley, and S.~Floyd, ``Designing {DCCP}: Congestion control
  without reliability,'' \emph{ACM SIGCOMM Comput. Communin. Rev.}, vol.~36,
  no.~4, pp. 27--38, Oct. 2006.

\bibitem{towsley79}
D.~Towsley and J.~K. Wolf, ``On the statistical analysis of queue lengths and
  waiting times for statistical multiplexers with {ARQ} retransmission
  schemes,'' \emph{{IEEE} Trans. Commun.}, vol.~27, no.~4, pp. 693--702, Apr.
  1979.

\bibitem{konheim80arq}
A.~Konheim, ``A queueing analysis of two {ARQ} protocols,'' \emph{{IEEE} Trans.
  Commun.}, vol.~28, no.~7, 1980.

\bibitem{anagnos86}
M.~E. Anagnostou and E.~N. Protonotarios, ``Performance analysis of the
  selective repeat {ARQ} protocol,'' \emph{{IEEE} Trans. Commun.}, vol.~34,
  no.~2, pp. 127--135, Feb. 1986.

\bibitem{ros89}
Z.~Rosberg and N.~Shacham, ``Resequencing delay and buffer occupancy under the
  selective-repeat {ARQ},'' \emph{{IEEE} Trans. Inf. Theory}, vol.~35, no.~1,
  pp. 166--173, Jan. 1989.

\bibitem{galambos}
J.~Galambos, \emph{The Asymptotic Theory of Extreme Order Statistics}.\hskip
  1em plus 0.5em minus 0.4em\relax New Jersey, USA: Wiley, 1978.

\bibitem{haan06evt}
L.~de~Haan and A.~Ferreira, \emph{Extreme Value Theory: An Introduction}.\hskip
  1em plus 0.5em minus 0.4em\relax New Yprk, USA: Springer, 2006.

\bibitem{blake06random}
C.~Studholme and I.~F. Blake, ``Random matrices and codes for the erasure
  channel,'' \emph{Algorithmica}, vol.~56, no.~4, 2010.

\bibitem{lu93}
D.-L. Lu and J.-F. Chang, ``Performance of {ARQ} protocols in nonindependent
  channel errors,'' \emph{{IEEE} Trans. Commun.}, vol.~41, no.~5, pp. 721--730,
  May 1993.

\bibitem{fantacci96arq}
R.~Fantacci, ``Queueing analysis of the selective repeat automatic repeat
  request protocol wireless packet networks,'' \emph{{IEEE} Trans. Veh.
  Technol.}, vol.~45, no.~2, 1996.

\bibitem{krunz00markov}
J.~G. Kim and M.~Krunz, ``Delay analysis of selective repeat {ARQ} for a
  {Markovian} source over a wireless channel,'' \emph{{IEEE} Trans. Veh.
  Technol.}, vol.~49, no.~5, 2000.

\bibitem{kallel90arq}
S.~Kallel, ``Analysis of a type {II} hybrid {ARQ} scheme with code combining,''
  \emph{{IEEE} Trans. Commun.}, vol.~38, no.~8, 1990.

\bibitem{zorzi08arq}
L.~Badia, M.~Levorato, and M.~Zorzi, ``Markov analysis of selective repeat type
  {II} hybrid {ARQ} using block codes,'' \emph{{IEEE} Trans. Commun.}, vol.~56,
  no.~9, 2008.

\bibitem{tse05sequencing}
R.~Fantacci, ``Analysis on packet resequencing for reliable network
  protocols,'' \emph{Performance Evaluation}, vol.~61, no.~4, 2005.

\bibitem{zheng2010multipath}
K.~Zheng, X.~Jiao, M.~Liu, and Z.~Li, ``An analysis of resequencing delay of
  reliable transmission protocols over multipath,'' in \emph{Proc. IEEE ICC},
  Cape Town, South Africa, May 2010.

\bibitem{turck2010asymp}
K.~D. Turck and S.~Wittevrongel, ``Receiver buffer behavior for the selective
  repeat protocol over a wireless channel: an exact and large-deviations
  analysis,'' \emph{Industrial and Management Optimization}, vol.~6, no.~3,
  2010.

\end{thebibliography}

\end{document}